\definecolor{myblue}{HTML}{0086D0}
\definecolor{Emerald}{rgb}{0.31, 0.78, 0.47}
\definecolor{MidnightBlue}{rgb}{0.1, 0.1, 0.44}
\definecolor{YellowOrange}{rgb}{0.95, 0.52, 0.0}
\definecolor{Magenta}{rgb}{1.0, 0.0, 1.0}
\definecolor{Salmon}{rgb}{1.0, 0.57, 0.64}
\definecolor{LimeGreen}{rgb}{0.462745098, 0.725490196, 0.000000000} %
\definecolor{codegreen}{rgb}{0,0.6,0}
\definecolor{codegray}{rgb}{0.5,0.5,0.5}
\definecolor{codepurple}{rgb}{0.58,0,0.82}
\definecolor{backcolour}{rgb}{0.95,0.95,0.92}
\lstdefinestyle{mystyle}{
    commentstyle=\color{codegreen},
    keywordstyle=\color{magenta},
    numberstyle=\tiny\color{codegray},
    stringstyle=\color{codepurple},
    basicstyle=\ttfamily\fontsize{4.5}{4}\selectfont,
    captionpos=b,                    
    keepspaces=true,                 
    showspaces=false,                
    showstringspaces=false,
    showtabs=false,                  
    tabsize=2
}
\lstdefinestyle{mystyle3}{
    commentstyle=\color{codegreen},
    keywordstyle=\color{magenta},
    numberstyle=\tiny\color{codegray},
    stringstyle=\color{codepurple},
    basicstyle=\ttfamily\fontsize{3.5}{3}\selectfont,
    captionpos=b,                    
    keepspaces=true,                 
    showspaces=false,                
    showstringspaces=false,
    showtabs=false,                  
    tabsize=2
}
\lstdefinestyle{mystyle2}{
    commentstyle=\color{codegreen},
    keywordstyle=\color{magenta},
    numberstyle=\tiny\color{codegray},
    stringstyle=\color{codepurple},
    basicstyle=\ttfamily\fontsize{6.5}{6}\selectfont,
    captionpos=b,                    
    keepspaces=true,                 
    showspaces=false,                
    showstringspaces=false,
    showtabs=false,                  
    tabsize=2
}
\begin{document}
\title*{Generator Matrices by Solving Integer Linear Programs}
\author{Lo\"is Paulin, David Coeurjolly, Nicolas Bonneel, Jean-Claude Iehl, Victor Ostromoukhov, and Alexander Keller}
\authorrunning{Paulin et al.}
\institute{Lo\"is Paulin \email{lois.paulin@liris.cnrs.fr}
\and David Coeurjolly  \email{david.coeurjolly@cnrs.fr}
\and Nicolas Bonneel  \email{nicolas.bonneel@liris.cnrs.fr}
\and Jean-Claude Iehl \email{jean-claude.iehl@liris.cnrs.fr}
\and Victor Ostromoukhov \email{victor.ostromoukhov@liris.cnrs.fr} \at Universit\'e de Lyon, UCBL, CNRS, INSA Lyon, LIRIS
\and Alexander Keller \email{akeller@nvidia.com} \at NVIDIA, Fasanenstr. 81, 10623 Berlin, Germany}

\maketitle

\abstract*{In quasi-Monte Carlo methods, generating high-dimensional low discrepancy sequences by generator matrices is a popular and efficient approach.
Historically, constructing or finding such generator matrices has been a hard problem.
In particular, it is challenging to take advantage of the intrinsic structure of a given numerical problem to design samplers of low discrepancy in certain subsets of dimensions.
To address this issue, we devise a greedy algorithm allowing us to translate desired net properties into linear constraints on the generator matrix entries.
Solving the resulting integer linear program yields generator matrices that satisfy the desired net properties.
We demonstrate that our method finds generator matrices in challenging settings, offering low discrepancy sequences beyond the limitations of classic constructions.}

\abstract{In quasi-Monte Carlo methods, generating high-dimensional low discrepancy sequences by generator matrices is a popular and efficient approach.
Historically, constructing or finding such generator matrices has been a hard problem.
In particular, it is challenging to take advantage of the intrinsic structure of a given numerical problem to design samplers of low discrepancy in certain subsets of dimensions.
To address this issue, we devise a greedy algorithm allowing us to translate desired net properties into linear constraints on the generator matrix entries.
Solving the resulting integer linear program yields generator matrices that satisfy the desired net properties.
We demonstrate that our method finds generator matrices in challenging settings, offering low discrepancy sequences beyond the limitations of classic constructions.}

\keywords{Quasi-Monte Carlo methods $\cdot$ Low discrepancy sequences $\cdot$ Digital $(t,m,s)$-nets and $(t,s)$-sequences $\cdot$ Generator matrices $\cdot$ Optimization $\cdot$ Integer linear programs}

\section{Introduction}

Monte Carlo and quasi-Monte Carlo integration are an important part of many numerical algorithms, for example, in computational finance and image synthesis.  
Problems in these domains often have an intrinsic structure that allows one to benefit from a particular uniformity profile imposed on selected subsets of dimensions. This uniformity may be specified by $(t,m,s)$-net properties, i.e. low discrepancy properties. A popular way to compute low discrepancy sequences is via generator matrices. 
Finding generator matrices that match a specific uniformity profile has been a challenging problem.
Low discrepancy sequences, such as Sobol’, Halton, and others, use specific matrix constructions to achieve provably strong properties.
However, these constructions represent only a tiny fraction of all possible generator matrices and hence may not be able to ideally match a desired uniformity profile.
As an example, Joe and Kuo's work~\cite{joe2008constructing} on optimizing the uniformity of all pairs of
dimensions demonstrates that the resulting generator matrices do not reach a satisfactory 2D
uniformity.
Methods for optimizing general matrices have been devised~\cite{sun2002digital}. They often rely on the random generation and selection of generator matrices.
This approach fails when the desired uniformity profiles are too restrictive.
Furthermore, some sets of uniformity profiles are provably infeasible in small prime bases such as $b=2$.

We propose a linear algebra-based understanding of the fundamental theorems of digital nets and their generator matrices.
We use this understanding to convert projective uniformity properties into polynomial and linear constraints on generator matrix entries.
Numerically solving for these constraints allows us to construct generator matrices with net structures that have not been achieved before.
Our method works in higher prime bases, too, offering one more degree of freedom to satisfy a specific set of projective net properties.

\section{Quasi-Monte Carlo Methods}

Quasi-Monte Carlo~\cite{Niederreiter1992} integration
estimates the value of the integral of $f$ by 
\[
  \int_{[0,1)^s} f(\vec u) d \vec u \approx \frac{1}{N} \sum_{i=0}^{N-1} f(\vec x_i) \, .
\]
Rather than using independent realizations of uniformly, identically distributed random variables,
low discrepancy sequences are used to generate deterministic samples  $\vec x_i$ that
are much more uniformly distributed over the $s$-dimensional unit cube $[0,1)^s$ than uniform random samples can be.

The uniformity of a point set $P = \{ \vec x_0, \ldots , \vec x_{N - 1}\}$ can be measured
by the star-discrepancy~\cite[p.~14]{Niederreiter1992}
\[
  D_N^*(P) := \sup_{B} \left| \lambda_s(B) - \frac{1}{N} \sum_{i = 0}^{N - 1} c_B(\vec x_i) \right| \mbox{ ,}
\]
where the supremum is taken over all sets $B$ of the form
$B=\prod_{i=1}^s[0,b_i]$, with $0 \leq b_i \leq 1$,
$\lambda_s$ is the Lebesgue measure of $B$, and
$c_B$ is the characteristic function of the set $B$ that is one for $\vec x_i \in B$ and zero otherwise.
The star-discrepancy can be interpreted as the worst-case integration
error of the class of characteristic functions $c_B$. The more uniformly the points
are distributed, the smaller the star-discrepancy is. Informally, a point set
is called low discrepancy if its star-discrepancy is small. For
low-discrepancy sequences of points, the star-discrepancy vanishes in
${\mathcal O}\left( \frac{\log^s N}{N} \right)$, while random points can only
achieve a discrepancy of order ${\mathcal O}\left( \sqrt{\frac{\log \log N}{N}} \right)$.

For functions of bounded variation $V$ in the sense of Hardy and Krause,
the Koksma-Hlawka inequality
\[
  \left| \frac{1}{N} \sum_{i=0}^{N-1} f(\vec x_i) - \int_{[0,1)^s} f(\vec u) d \vec u \right| \leq V(f) \cdot D_N^*(P) \, ,
\]
bounds the integration error by the product of variation and star-discrepancy.
Even though in practice many functions may not be of bounded
variation, numerical experiments suggest that increasing the uniformity
of the points by decreasing the discrepancy may lower the integration error.

\subsection{Digital $(t,s)$-Sequences and $(t, m, s)$-Nets}

Deterministic digital $(t,s)$-sequences and $(t, m, s)$-nets~\cite{Niederreiter1992}
are low discrepancy point sequences and sets that can be efficiently generated.
Given generator matrices $C_1, \ldots, C_s$, the algorithm
\begin{equation} \label{Eqn:Algorithm}
   x_i^{(j)} = \left(\begin{array}{c}b^{-1} \\ \vdots \\ b^{-m}\end{array}\right)^T
        \cdot \underbrace{C_j \cdot \left(\begin{array}{c}i_1(i) \\ \vdots \\ i_{m}(i) \end{array}\right)}_\text{multiplication in $\mathbb{F}_b$}
\end{equation}
first multiplies each generator matrix $C_j$ by the point index $i = \sum_{k = 1}^\infty i_k(i) b^{k-1}$
represented as vector of $m$ digits $i_k$ in the integer base $b$. The result of the matrix multiplication
in a finite field $\mathbb{F}_b$ is then mapped to the unit interval $[0,1)$ by a scalar product
with the vector of inverse powers of the base $b$. While we work with prime bases, using a prime power requires mapping the digits
to the finite field and mapping the resulting digits back into the integers~\cite[Sec.~4.3]{Niederreiter1992}.

The stratification and hence uniformity
properties of this digital construction are characterized by elementary intervals as given by

\begin{figure}
    \centering
    \setlength{\unitlength}{0.2\linewidth}
    \begin{tabular}{ccccc}
    \begin{picture}(1,1)(0,0)
    \thicklines
    \color{orange}
    \multiput(0, 0.25)(0, 0.25){3}{\line(1, 0){1}}
    \color{myblue}
    \multiput(0.5, 0)(0.5,0){1}{\line(0,1){1}}
    \color{black}
    \put(0,0){\line(1,0){1}}
    \put(0,0){\line(0,1){1}}
    \put(1,0){\line(0,1){1}}
    \put(0,1){\line(1,0){1}}
    \end{picture}
    & & \begin{picture}(1,1)(0,0)
    \thicklines
    \color{orange}
    \multiput(0, 0.25)(0, 0.25){3}{\line(1, 0){1}}
    \color{myblue}
    \multiput(0.25, 0)(0.25,0){3}{\line(0,1){1}}
    \color{black}
    \put(0,0){\line(1,0){1}}
    \put(0,0){\line(0,1){1}}
    \put(1,0){\line(0,1){1}}
    \put(0,1){\line(1,0){1}}
    \end{picture}
    & & \begin{picture}(1,1)(0,0)
    \thicklines
    \color{orange}
    \multiput(0, 0.333333)(0, 0.33333){2}{\line(1, 0){1}}
    \color{myblue}
    \multiput(0.111111, 0)(0.111111,0){8}{\line(0,1){1}}
    \color{black}
    \put(0,0){\line(1,0){1}}
    \put(0,0){\line(0,1){1}}
    \put(1,0){\line(0,1){1}}
    \put(0,1){\line(1,0){1}}
    \end{picture} \\
    $b=2$, $\vec d=(1,2)$
    & & $b=2$, $\vec d=(2,2)$
    & & $b=3$, $\vec d=(2,1)$
    \end{tabular}
    \caption{Examples of elementary intervals in bases $b = 2$ and $b = 3$ in $s = 2$ dimensions.
    The vectors $\vec d = (d_1, d_2)$ determine the resolutions $b^{-d_j}$ along
    the canonical axes, resulting in the depicted stratification.}
    \label{Fig:elem_intervals}
\end{figure}
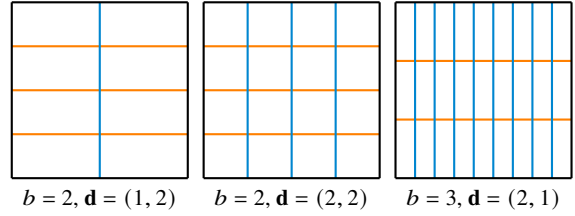

\begin{definition}[see~{\cite[p.~48]{Niederreiter1992}}] \label{Def:Elementary}
      An interval of the form
      \[
        E := \prod_{j=1}^s \left[a_j b^{-d_j}, (a_j + 1) b^{-d_j}\right) \subseteq [0,1)^s\,,
      \]
      for $0 \leq a_j < b^{d_j}$ and integers $d_j \geq 0$
      is called an \emph{elementary interval in base~$b$}.
\end{definition}

The elementary intervals (see Fig.~\ref{Fig:elem_intervals}) are used
to characterize the stratification of point sets as specified in

\begin{definition}[see {\cite[Def.~4.1]{Niederreiter1992}}] \label{Def:tmsNet}
      For integers $0 \leq t \leq m$, a \emph{$(t,m,s)$-net in base~$b$} is a point set
      of $b^m$ points in $[0,1)^s$ such that there are exactly $b^t$ points in each elementary
      interval $E$ with volume $b^{t-m}$.
\end{definition}

\begin{figure}
        \centering
          \includegraphics[trim={5.25cm 8cm 5.25cm 8cm}, clip, width=0.9\textwidth]{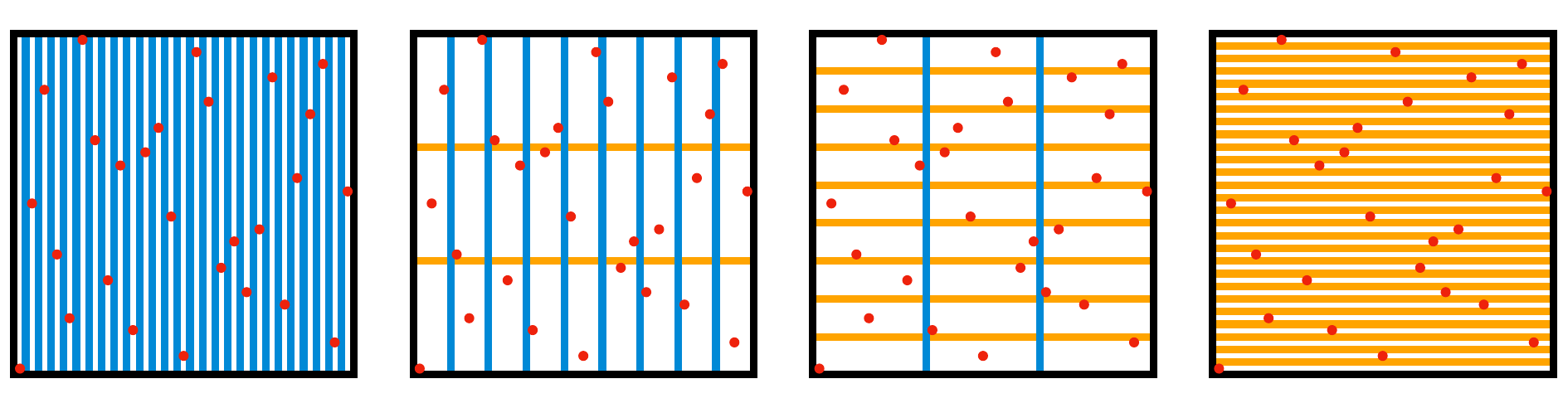}
      \caption{Example of a $(0,3,2)$-net in base $b = 3$. There are $b^m = 3^3 = 27$ points in $s = 2$ dimensions and each elementary interval (subdividing the $x$ and $y$ coordinates into $b^i = 3^i$, $i=1..m$, intervals) contains exactly $b^t = 3^0 = 1$ point.}
      \label{Fig:tmsnet}
\end{figure}  

Fig.~\ref{Fig:tmsnet} illustrates the structure of the elementary
intervals for the example of a $(0,3,2)$-net in base $b = 3$. Note that $\sum_{j=1}^s d_j = m - t$
by the definitions of the elementary intervals and the $(t, m, s)$-nets.
The structure of $(t, m, s)$-nets can be extended to sequences of points: 

\begin{definition}[see {\cite[Def.~4.2]{Niederreiter1992}}] \label{Def:tsSequence}
      For an integer $t \geq 0$, a sequence $\vec{x}_0,\vec{x}_1,\dots$ of points in $[0,1)^s$
      is a \emph{$(t,s)$-sequence in base~$b$} if, for all integers $k \geq 0$ and $m > t$,
      the point set $\vec{x}_{kb^m},\dots,\vec{x}_{(k+1)b^m-1}$ is a $(t,m,s)$-net in base~$b$.
\end{definition}

\begin{figure}
    \[
    \begin{array}{ccc}
    \begin{pNiceMatrix}[margin]
    \Block[fill=myblue!35]{1-3}{}
    c^{(1)}_{1,1} & c^{(1)}_{1,2} & c^{(1)}_{1,3} \\
    c^{(1)}_{2,1} & c^{(1)}_{2,2} & c^{(1)}_{2,3} \\
    c^{(1)}_{3,1} & c^{(1)}_{3,2} & c^{(1)}_{3,3} \\
    \end{pNiceMatrix}
    \cdot
    \begin{pNiceMatrix}
    i_1 \\ i_2 \\ i_3
    \end{pNiceMatrix}
    & = & 
    \begin{pNiceMatrix}[margin]
     \Block[fill=myblue!35]{1-1}{}
    e^{(1)}_1 \\ e^{(1)}_2 \\ e^{(1)}_3
    \end{pNiceMatrix} \\ \\
    \begin{pNiceMatrix}[margin]
    \Block[fill=orange!35]{2-3}{}
    c^{(2)}_{1,1} & c^{(2)}_{1,2} & c^{(2)}_{1,3} \\
    c^{(2)}_{2,1} & c^{(2)}_{2,2} & c^{(2)}_{2,3} \\
    c^{(2)}_{3,1} & c^{(2)}_{3,2} & c^{(2)}_{3,3} \\
    \end{pNiceMatrix}
    \cdot
    \begin{pNiceMatrix}
    i_1 \\ i_2 \\ i_3
    \end{pNiceMatrix}
    & = & 
    \begin{pNiceMatrix}[margin]
     \Block[fill=orange!35]{2-1}{}
    e^{(2)}_1 \\ e^{(2)}_2 \\ e^{(2)}_3
    \end{pNiceMatrix}
    \end{array}
    \qquad
    \raisebox{-1cm}{\setlength{\unitlength}{0.18\linewidth}
    \begin{picture}(1,1)(0,0)
    \thicklines
    \color{orange}
    \multiput(0, 0.25)(0, 0.25){3}{\line(1, 0){1}}
    \color{myblue}
    \multiput(0.5, 0)(0.5,0){1}{\line(0,1){1}}
    \color{black}
    \put(0,0){\line(1,0){1}}
    \put(0,0){\line(0,1){1}}
    \put(1,0){\line(0,1){1}}
    \put(0,1){\line(1,0){1}}
    \put(0.15,0.08){\color{myblue}0\color{orange}00}
    \put(0.65,0.08){\color{myblue}1\color{orange}00}
    \put(0.15,0.33){\color{myblue}0\color{orange}01}
    \put(0.65,0.33){\color{myblue}1\color{orange}01}
    \put(0.15,0.58){\color{myblue}0\color{orange}10}
    \put(0.65,0.58){\color{myblue}1\color{orange}10}
    \put(0.15,0.83){\color{myblue}0\color{orange}11}
    \put(0.65,0.83){\color{myblue}1\color{orange}11}
    \end{picture}}
    \qquad
     \underbrace{\begin{pNiceMatrix}[margin]
    \Block[fill=myblue!35]{1-3}{}
    c^{(1)}_{1,1} & c^{(1)}_{1,2} & c^{(1)}_{1,3} \\
    \Block[fill=orange!35]{2-3}{}
    c^{(2)}_{1,1} & c^{(2)}_{1,2} & c^{(2)}_{1,3} \\
    c^{(2)}_{2,1} & c^{(2)}_{2,2} & c^{(2)}_{2,3} \\
    \end{pNiceMatrix}}_{= M_{\vec k}}
    \cdot
    \begin{pNiceMatrix}
    i_1 \\ i_2 \\ i_3
    \end{pNiceMatrix}
    = 
    \begin{pNiceMatrix}[margin]
    \Block[fill=myblue!35]{1-1}{} e^{(1)}_1 \\
    \Block[fill=orange!35]{2-1}{}
    e^{(2)}_1 \\
    e^{(2)}_2
    \end{pNiceMatrix}
    \]

    \caption{Illustration of the algebraic relationship of generator matrices and elementary intervals.
    Left: As highlighted, the first few rows of a generator matrix determine the most significant digits $e_k$.
    Middle: For the displayed set of elementary intervals, the digit $e^{(1)}_1$ determines the partition along
    the first dimension, while $e^{(2)}_1$ and $e^{(2)}_2$ select the partition along the second dimension.
    Right: If now $\text{det}(M_{\vec k}) \not = 0$, the composite matrix $M_{\vec k}$ will be a bijection
    between the elementary intervals and the index $(i_1, i_2, i_3)$.}
    \label{Fig:GeneratorMatrices}
\end{figure} 

\subsection{Relating Generator Matrices and Elementary Intervals}
\label{Sec:Constraints}

Fig.~\ref{Fig:GeneratorMatrices} illustrates the relationship between generator
matrices and elementary intervals: multiplying a generator matrix $C_j$
by an index vector (see Equation~\eqref{Eqn:Algorithm}), the $k$ most significant digits of the result are
determined by the first $k$ rows of the generator matrix. Then, a matrix
$M_{\mathbf{k}}$ composed of the first $k_j$ rows of $C_j$,
where $\mathbf{k}:=\left( k_1, \dots, k_s \right) \in \mathbb{N}_0^s$ and $\sum_{j=1}^s k_j = m$,
defines a mapping from the indices to elementary intervals of
size $\left( \frac{1}{b^{k_1}}, \dots, \frac{1}{b^{k_s}} \right)$ and consequently of volume $\frac{1}{b^m}$.
If and only if $\text{det}(M_{\vec k}) \not = 0$ in the Galois field $\mathbb{F}_b$, the mapping will be bijective.
Hence the determinants of the matrices $M_{\vec k}$ can be used
to verify the properties of $(0,m,s)$-nets, where each elementary interval
contains exactly one point.

For $t > 0$, we select $\mathbf{k} := \left( k_1,  \dots, k_s\right)$ such that $\sum_{j=1}^s k_j = m - t$.
This results in rectangular matrices $M_{\mathbf{k}}$ defining a mapping between indices and elementary intervals of volume $\frac{1}{b^{m-t}}$.
We want each elementary interval to be the image of exactly $b^t$ indices, i.e. contain exactly $b^t$ points.
This is equivalent to having $\dim(\text{ker } M_{\mathbf{k}}) = t$ and $\dim(\text{im } M_{\mathbf{k}}) = m - t$, because the dimension of the index vector in Equation~\eqref{Eqn:Algorithm} is $m$.
$M_{\mathbf{k}}$ having $m-t$ rows, this means $M_{\mathbf{k}}$ needs
to have full rank in the Galois field $\mathbb{F}_b$.
As such, the generator matrices $C_1, \ldots, C_s$ define a $(t,m,s)$-net if and only if $\forall \mathbf{k}=\left(k_1,\dots,k_s\right)$ with $\sum_{j=1}^s k_j = m - t$, $M_{\mathbf{k}}$ is of full rank.
The same illustration as in Fig.~\ref{Fig:GeneratorMatrices} applies, except that now the $t$ least significant
digits of the index $i$ are not considered.

\begin{figure}
\centering
\begin{overpic}[width=\linewidth]{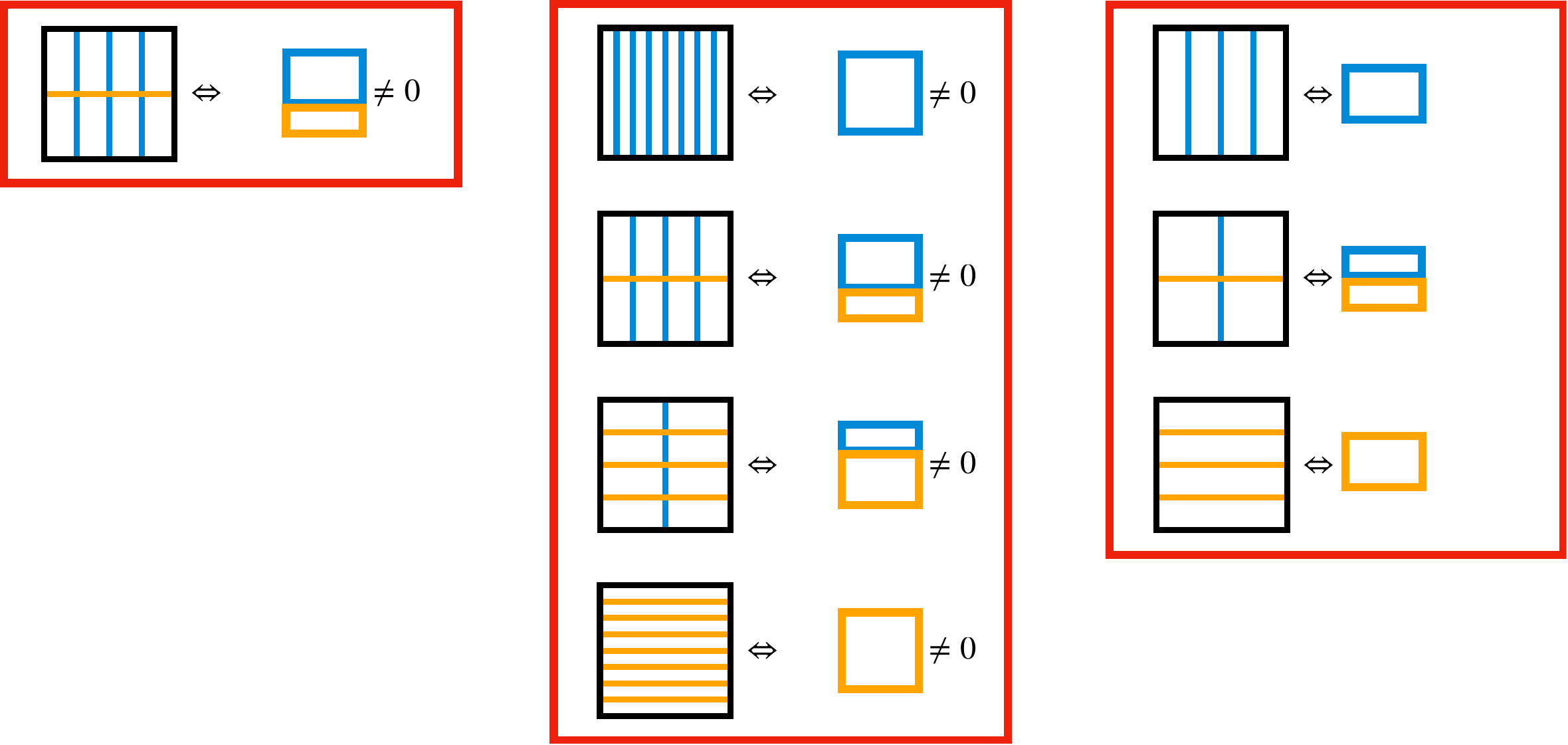}\small
  \put(10,48){Stratified}
  \put(43,48){$(0,m,s)-$net}
  \put(80,48){$(t,m,s)-$net}
  \put(14.5,41){det}
  \put(50,41){det}
  \put(50,29.3){det}
  \put(50,17.4){det}
  \put(50,5.5){det}
  \put(92,41){\scriptsize full rank}
  \put(92,29.3){\scriptsize full rank}
  \put(92,17.4){\scriptsize full rank}
\end{overpic}
\caption{Stratification properties (left) expressed by matrix determinants
allow for checking $(0,m,s)$-net properties (middle) and $(t,m,s)$-net properties
for $t > 0$ (right). The red boxes group sets of constraints as explained in Sec.~\ref{Sec:Constraints}.} %
\label{Fig:Overview}
\end{figure}

With parameters $t \geq 0$ at hand, some techniques to relax constraints with $t=0$
within the prior MatBuilder~\cite{matbuilder2022} 
become dispensable as demonstrated by the results in Sec.~\ref{Sec:Results}. A first
example is generalized stratification, where for $b^m$ points and
\[
           \forall \left(k_1, \dots, k_{s}\right) 
  \in \left\{ \left\lfloor\frac{m}{s}\right\rfloor, \left\lceil\frac{m}{s}\right\rceil \right\}^s 
  \text{ with }  \sum_{j=1}^{s} k_j = m\,,
\]
all cells of size $\prod_{j=1}^s b^{m - k_j}$ contain exactly 1 point.
Note that unless $s$ divides $m$, the number of intervals along each axis differs by at most a factor of $b$.
A second expendable technique bounded the difference of any two numbers of selected rows $k_j$ by a
value $m' \in \{0, \ldots, m\}$. Hence, $m' = 0$ realizes property A of the Sobol' sequence~\cite{Sobol67},
where contiguous blocks of $b^s$ samples are stratified, and $m' = 1$ amounts to generalized
stratification as just characterized before.

Fig.~\ref{Fig:Overview} summarizes our set of stratification constraints that are
specified by using matrices composited from generator matrices as
described before.
Regarding $(t,s)$-sequences, verifying the $(t,m,s)$-net properties for all $m \geq t$ 
is equivalent to checking whether all top-left square sub-matrices generate $(t,m,s)$-nets.
This result is stated in a slightly different manner by

\begin{theorem}[see {\cite[p.~73]{Niederreiter1992}}] \label{Thm:LinearEquations}
      Suppose that the integer $t \geq 0$ satisfies the following property: 
      For any integers $m > t$ and $k_1,...,k_s \geq 0$ with $ \sum_{j=1}^s k_j = m - t $
      and any $e_i^{(j)} \in \mathbb{F}_b$, the system of $m - t$ linear equations
      \[
        \sum_{r=0}^{m-1} c_{i,r}^{(j)} z_r = e_i^{(j)} \text{ for } 1 \leq i \leq k_j,\, 1 \leq j \leq s
      \]
      in the unknowns $z_0, \dots, z_{m-1}$ over $\mathbb{R}$ has exactly $b^t$ solutions. Then the sequence $\vec x_i$ generated by Equation~\eqref{Eqn:Algorithm} is a $(t,s)$-sequence in base $b$.
\end{theorem}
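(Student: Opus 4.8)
The plan is to unwind Definition~\ref{Def:tsSequence} and reduce the claim, one index-block at a time, to exactly the linear-algebra hypothesis of the theorem (worked, as throughout, over the field $\mathbb{F}_b$, so that ``over $\mathbb{R}$'' in the statement is to be read as ``over $\mathbb{F}_b$''). Fix integers $m > t$ and $k \geq 0$; it suffices to show that $P := \{\vec x_{kb^m}, \dots, \vec x_{(k+1)b^m-1}\}$ is a $(t,m,s)$-net in base $b$, i.e.\ that every elementary interval of volume $b^{t-m}$ contains exactly $b^t$ of its $b^m$ points. Writing $i = kb^m + i'$ with $0 \leq i' < b^m$, the $m$ least significant base-$b$ digits of $i$ are those of $i'$ and run over all of $\mathbb{F}_b^m$ as $i'$ runs over $\{0,\dots,b^m-1\}$, whereas the remaining digits are those of $k$ and, when fed through Equation~\eqref{Eqn:Algorithm}, contribute the same additive constant in $\mathbb{F}_b$ to each digit of each coordinate of $\vec x_i$ for all $i$ in the block. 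Thus on this block $\vec x_i$ is a fixed digital shift of the point obtained from $i'$ through the first $m$ columns of the generator matrices; since a digital shift permutes the elementary intervals of any given resolution $(b^{-d_1},\dots,b^{-d_s})$ among themselves, it preserves per-interval point counts, and we may assume $k = 0$.

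Next I would translate membership in an elementary interval into that very linear system. Let $E = \prod_{j=1}^{s} [a_j b^{-d_j}, (a_j+1)b^{-d_j})$ have volume $b^{t-m}$, so $\sum_{j=1}^{s} d_j = m-t$, and set $k_j := d_j$. A point $\vec x_i$ lies in $E$ if and only if, for every $j$, the first $d_j$ base-$b$ digits of $x_i^{(j)}$ equal the $d_j$-digit base-$b$ representation of $a_j$; and by the digital construction (Fig.~\ref{Fig:GeneratorMatrices}) the $\ell$-th digit of $x_i^{(j)}$ is the $\mathbb{F}_b$-linear form $\sum_{r=0}^{m-1} c_{\ell,r}^{(j)} z_r$ in the digit vector $\vec z = (z_0,\dots,z_{m-1})$ of $i$, where $z_r = i_{r+1}(i)$. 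Hence $\vec x_i \in E$ is exactly the condition that $\vec z \in \mathbb{F}_b^m$ solves the system of the theorem, with its $m - t = \sum_j k_j$ right-hand sides chosen to be the base-$b$ digits of $a_1,\dots,a_s$. As $i$ runs over $\{0,\dots,b^m-1\}$ the vector $\vec z$ runs bijectively over $\mathbb{F}_b^m$, so the number of points of $P$ lying in $E$ equals the number of solutions $\vec z \in \mathbb{F}_b^m$ of this system.

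Finally I would invoke the hypothesis. As $a_j$ ranges over $\{0,\dots,b^{d_j}-1\}$ its digits are arbitrary elements of $\mathbb{F}_b$, and $m > t$ with $\sum_{j=1}^s k_j = m-t$, so by assumption the system has exactly $b^t$ solutions; that is, $E$ contains exactly $b^t$ points of $P$. Since $E$ was an arbitrary elementary interval of volume $b^{t-m}$, $P$ is a $(t,m,s)$-net, and since $m > t$ and $k \geq 0$ were arbitrary, $\vec x_0, \vec x_1, \dots$ is a $(t,s)$-sequence in base $b$.

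I expect the only genuinely delicate point to be the reduction in the first paragraph: making precise that, on the block $\{kb^m,\dots,(k+1)b^m-1\}$, the (semi-infinite) generator matrices of the sequence act only through their first $m$ columns, up to a constant digital shift of each coordinate that is net-preserving. Everything else is the dictionary ``point in an elementary interval $\Longleftrightarrow$ digit vector solves a linear system'' already used in Sec.~\ref{Sec:Constraints}; in particular the counting step can equivalently be stated via rank, since ``exactly $b^t$ solutions for every right-hand side'' holds if and only if the stacked matrix $M_{\mathbf{k}}$ built from the first $k_j$ rows of the $C_j$ has full row rank $m-t$ over $\mathbb{F}_b$ --- which is precisely the $(t,m,s)$-net criterion recalled there.
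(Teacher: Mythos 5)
Your proposal is correct, but note that the paper itself offers no proof of this statement: Theorem~\ref{Thm:LinearEquations} is quoted from Niederreiter \cite[p.~73]{Niederreiter1992} and used as a black box, so there is no in-paper argument to compare against. What you have written is essentially the standard textbook proof: fix a block of $b^m$ consecutive indices, observe that on this block the first $m$ output digits of each coordinate are an affine function of the $m$ low-order digits of the index (linear part given by the first $m$ columns of the $C_j$, constant part coming from the digits of $k$ through the later columns), translate membership in an elementary interval of volume $b^{t-m}$ into the linear system of the hypothesis, and count solutions. Two small remarks. First, your reduction to $k=0$ via ``a digital shift permutes elementary intervals'' is fine but slightly more than you need: since the hypothesis is assumed for \emph{arbitrary} right-hand sides $e_i^{(j)}$, you can simply absorb the constant shift into the right-hand side and apply the hypothesis directly, avoiding the permutation argument altogether. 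Second, the reduction tacitly requires the generator matrices to have columns beyond the $m$-th (i.e.\ the semi-infinite matrices of a genuine digital sequence); with the finite $m\times m$ matrices actually produced in the paper, only the ``progressive $(t,m,s)$-net'' property up to $b^m$ points is obtained, a caveat the paper itself states after Definition~\ref{Def:tsSequence} is invoked in the greedy construction. You are also right that ``over $\mathbb{R}$'' in the statement should be read as ``over $\mathbb{F}_b$''; that is a typo in the paper.
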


Especially $(t,s)$-sequences in base $b= 2$, as for example the construction by
Sobol'~\cite{Sobol67} are popular, because they can be computed
very efficiently using bit-parallel vector operations for the implementation
of the Galois field $\mathbb{F}_2$. The small basis, however, may
be limiting, as we claim in

\begin{theorem} \label{Thm:Pairs}
    There are no more than $b$ generator matrices in base $b$ such that all pairs of matrices generate a $(0,2)$-sequence.
\end{theorem}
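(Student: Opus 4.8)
The plan is to feed the linear-algebraic reformulation of net properties from Section~\ref{Sec:Constraints} (equivalently, the $t=0$ case of Theorem~\ref{Thm:LinearEquations}) only its two cheapest instances and show that these already bound the size of the family by $b$. Recall that a pair $(C_a,C_b)$ generates a $(0,2)$-sequence precisely when, for every $m\ge 1$ and every split $k_1+k_2=m$, the $m\times m$ matrix built from the first $k_1$ rows of $C_a$ and the first $k_2$ rows of $C_b$ (each truncated to $m$ columns) is invertible over $\mathbb{F}_b$. I will use only $m=1$ and $m=2$.

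First I would instantiate this with $(k_1,k_2)=(1,0)$, forcing each $C_a$ on its own to generate a $(0,1)$-sequence; in particular its top-left $1\times 1$ block is invertible, i.e.\ $c^{(a)}_{1,1}\neq 0$ for every matrix $C_a$ of the family. Next I would take $m=2$ with $(k_1,k_2)=(1,1)$ for an arbitrary pair $a\neq b$: the $2\times 2$ matrix with rows $(c^{(a)}_{1,1},c^{(a)}_{1,2})$ and $(c^{(b)}_{1,1},c^{(b)}_{1,2})$ must be invertible, that is, $c^{(a)}_{1,1}c^{(b)}_{1,2}-c^{(a)}_{1,2}c^{(b)}_{1,1}\neq 0$. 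I would then map each $C_a$ to the slope $\beta_a:=c^{(a)}_{1,2}\,\bigl(c^{(a)}_{1,1}\bigr)^{-1}\in\mathbb{F}_b$, which is legitimate thanks to $c^{(a)}_{1,1}\neq 0$, and observe that the determinant condition is exactly the assertion $\beta_a\neq\beta_b$ for $a\neq b$. Hence $a\mapsto\beta_a$ is injective into $\mathbb{F}_b$, a set of size $b$, so the family has at most $b$ members.

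The one step carrying real content—and where one must not lose a unit—is the interplay of the two constraints. The pairwise invertibility of the $2\times 2$ first-row blocks, taken alone, only forces the first rows to be pairwise linearly independent, i.e.\ to represent distinct points of $\mathbb{P}^1(\mathbb{F}_b)$, which has $b+1$ elements, yielding the weaker bound $b+1$. It is precisely the one-dimensional self-constraint $c^{(a)}_{1,1}\neq 0$ that rules out the point at infinity $[0:1]$, deleting exactly one projective class and sharpening the bound to $b$. So the argument hinges on combining the $m=1$ and $m=2$ conditions; nothing about larger $m$—hence nothing about the infiniteness of the matrices—is needed, and I expect the only subtlety to be stating this combination cleanly rather than any computation.
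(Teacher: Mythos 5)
Your proposal is correct and follows essentially the same route as the paper: it combines the $m=1$ condition ($c^{(j)}_{1,1}\neq 0$) with the $m=2$, $(k_1,k_2)=(1,1)$ condition (pairwise linear independence of truncated first rows), and your slope map $\beta_a=c^{(a)}_{1,2}\bigl(c^{(a)}_{1,1}\bigr)^{-1}$ is exactly the normalization used in the paper's Lemma~\ref{lemma:vectors} to count at most $b$ such rows. Your added remark that the $m=1$ constraint is what sharpens the projective bound $b+1$ to $b$ is a nice clarification but does not change the argument.
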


A constraint graph $G$ is a graph with a vertex for each dimension, and with an edge between two dimensions that can be found in the same sequence constraint.
As $(0,2)$-sequences constraints between all pairs of dimension are included in $(t,s)$-sequence constraints, Theorem~\ref{Thm:Pairs} states that $G$ admitting not having any clique of size greater than $b$ and thus admitting a $b$ coloring is a necessary condition for the system of constraints to have solutions. 
To prove the theorem, we need the following

\begin{lemma} \label{lemma:vectors}
    In base $b$ there are no more than $b$ vectors of dimension $2$ with a non-zero first component such that all vectors are pair-wise linearly independent.
\end{lemma}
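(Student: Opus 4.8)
The plan is to work over the finite field $\mathbb{F}_b$ and reduce pairwise linear independence of two-dimensional vectors with non-zero first coordinate to a counting argument on ``slopes.'' First I would observe that a vector $v = (v_1, v_2) \in \mathbb{F}_b^2$ with $v_1 \neq 0$ determines the ratio $v_2 v_1^{-1} \in \mathbb{F}_b$, which I will call the \emph{slope} of $v$; since $v_1$ is invertible in $\mathbb{F}_b$, this is well defined, and every non-zero scalar multiple of $v$ again has a non-zero first component and the same slope.

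Next I would show that two such vectors $v = (v_1, v_2)$ and $w = (w_1, w_2)$, both with non-zero first component, are linearly independent over $\mathbb{F}_b$ if and only if they have distinct slopes. Indeed, $v$ and $w$ are linearly dependent precisely when $\det \begin{pmatrix} v_1 & w_1 \\ v_2 & w_2 \end{pmatrix} = v_1 w_2 - v_2 w_1 = 0$; dividing this identity by $v_1 w_1 \neq 0$ shows it is equivalent to $v_2 v_1^{-1} = w_2 w_1^{-1}$, i.e. to equality of slopes.

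Finally, I would conclude by the pigeonhole principle: given any family of vectors in $\mathbb{F}_b^2$, each with non-zero first component, that is pairwise linearly independent, the previous step forces their slopes to be pairwise distinct elements of $\mathbb{F}_b$. Since $|\mathbb{F}_b| = b$, such a family has at most $b$ members, which is the assertion of the lemma. The only points needing a little care are checking that the slope map is well defined on the relevant set of vectors and that the $2 \times 2$ determinant criterion for linear dependence holds over a field; both are routine, and I do not anticipate any genuine obstacle, as the statement is essentially a clean pigeonhole argument once the slope map is in place.
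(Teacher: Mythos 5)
Your proof is correct and follows essentially the same route as the paper: normalizing by the (invertible) first component — your ``slope'' $v_2 v_1^{-1}$ is exactly the paper's reduction to vectors of the form $(1,k)$ — and then counting the at most $b$ possible values of that ratio. The determinant criterion you spell out is the same linear-independence check the paper uses implicitly, so there is no substantive difference.
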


\begin{proof}
    Without loss of generality, we can assume that the first component of our vectors is always $1$ as any vector is linearly equivalent to one with the first component equal to $1$.
    For all $k_1 \neq k_2$ the vectors $(1,k_0)$ and $(1,k_1)$ are linearly independent.
    There are $b$ possible values for the second component.
    Thus there are no more than $b$ vectors of dimension $2$ with a non-zero first component such that all vectors are pair-wise linearly independent.
\end{proof}

With the lemma at hand, we are ready to prove Theorem~\ref{Thm:Pairs}:

\begin{proof}
    Let $C_1, \dots, C_s$ be $s$ generator matrices in base $b$ such that all pairs of matrices generate a $(0,2)$-sequence.
    Then all pairs of top-left cornered square submatrices of size $m$ generate a $(0,m,2)$-net.
    For $m=1$ this means for $1 \leq j \leq s$ all $c^{(j)}_{1,1} \neq 0$.
    For $m=2$ this means that all the first rows of each matrix must be pair-wise linearly independent.
    By Lemma~\ref{lemma:vectors}, there are no more than $b$ linearly independent first rows.
    Thus there are no more than $b$ matrices in base $b$ such that all pairs of matrices generate a $(0,2)$-sequence.
\end{proof}

\begin{figure}
\centering
\begin{tabular}{ccc}
\begin{minipage}{0.45\linewidth}
\begin{tcolorbox}[width=\linewidth,colback=backcolour,boxsep=-2mm,colframe=backcolour,arc=0mm]
\begin{lstlisting}[style=mystyle2]
#One-weak-constraint
s=6
m=10
b=3
weak 1 net 0 1 2 3 4 5
\end{lstlisting}
\end{tcolorbox}
\begin{tcolorbox}[width=\linewidth,colback=backcolour,boxsep=-2mm,colframe=backcolour,arc=0mm]
\begin{lstlisting}[style=mystyle2]
#Overlapping-constraints
s=6
m=10
b=3
net t0 0 1
net t0 1 2
weak 1 net t1 3 4 5
weak 1 net t2 0 1 2 3 4 5
\end{lstlisting}
\end{tcolorbox}
\end{minipage}
& &
\begin{minipage}{0.45\linewidth}
\begin{tcolorbox}[width=\linewidth,colback=backcolour,boxsep=-2mm,colframe=backcolour,arc=0mm]
\begin{lstlisting}[style=mystyle2]
#Generic-proj-LDS
s=6
m=10
b=3
net 0 1
net 1 2
net 2 3
net 3 4
net 4 5
weak 1 net 0 2
weak 1 net 0 3
weak 1 net 0 4
weak 1 net 0 5
weak 1 net 1 3
weak 1 net 1 4
weak 1 net 1 5
weak 1 net 2 4
weak 1 net 2 5
weak 1 net 3 5
\end{lstlisting}
\end{tcolorbox}
\end{minipage}
\end{tabular}
\caption{Example profiles to specify generator matrices by constraints: The parameter
  $s$ refers to the problem dimension, $b$ is the prime base, and $m$ is the
  matrix size to generate up to $b^m$ points. The remaining lines
  specify net and hence implicitly stratification constraints on selected
  dimensions.
  The keyword \texttt{net} requires $(t,m,s)$-net properties on the subsequent set of dimensions,
  where by default $t = 0$. Larger $t$-parameters are specified explicitly using the keyword \texttt{t},
  e.g. \texttt{t2} $\equiv t = 2$.
  The keyword \texttt{weak} indicates that the constraint
  is not strict, and its relative strength $w_{\nu_j}$ is given by the value next to the keyword.}
  \label{Fig:Profiles}
\end{figure}

\section{Specifying Generator Matrices by Constraints}

The profile language (see Fig.~\ref{Fig:Profiles}) as introduced in~\cite{matbuilder2022} allows one
to specify generator matrices by constraints.
A profile first selects the dimension $s$, prime base $b$, and matrix
size $m$, followed by a list of constraints. Each constraint affects
a selected subset of dimensions, which allows for forging
application-specific uniformity properties. Then the constraints
are transformed into an Integer Linear Program (see Fig.~\ref{Fig:ILP}) whose
solution specifies the $s$ generator matrices of size $m\times m$.

We extend the syntax of the profile language to include $(t,m,s)-$net
constraints with arbitrary $t\geq0$ as an additional parameter to the
\texttt{net} and \texttt{weak net} keywords. As an example, the profile \texttt{Overlapping-constraints} in
Fig.~\ref{Fig:Profiles} shows the constraints to
design a progressive net in dimension 6 and base 3 up to
$3^{10}$ points that is a $(0,10,2)$-net for the first two pairs of
dimensions, as close to a $(1,10,3)$-net as possible for the dimensions
$\{3,4,5\}$, and as close to a $(2,10,6)$-net as possible.

\subsection{Overlapping Constraints}

Theoretical results~\cite{Niederreiter1992} state that the best possible $t$-parameter for a (t,m,s)-net is $t = s - b - 1$.
However, many applications exhibit a structure where the uniformity of a subset of dimensions greatly impacts the integration error.
This has already been observed in the work of Joe and Kuo~\cite{joe2008constructing} and L'Ecuyer et al.~\cite{l2022tool}.
As the dimension of the subsets often is much smaller than the global dimension, the $t$-parameter of the subset of dimensions theoretically can be smaller than the global $t$-parameter.
In our profile language (see Fig.~\ref{Fig:Profiles}), it is straightforward to specify such potentially overlapping constraints, which offers a new way of designing generator matrices.

\subsection{Specifying Constraints in $\mathbb{Z}$}

Solvers are usually devised to work in $\mathbb{Z}$, however, our constraints are in $\mathbb{F}_b$. In order to take advantage of existing  solvers, we  need to convert our constraints to $\mathbb{Z}$.
In $\mathbb{F}_b$ our constraints are of the form $\sum_i w_i x_i \neq 0$.
We transform them to $0 < \sum_i w_i x_i + kb < b$ with $k \in \mathbb{Z}$ an additional variable to solve for and $0 \leq x_i < b$  (see Fig.~\ref{Fig:ILP}).
This way we emulate the modulo arithmetic of prime base Galois fields.
Note that this trick is limited to prime bases $b$.

\subsection{Weak Constraints}

Each net constraint on a set of dimensions is actually a set of linear sub-constraints, one per elementary interval shape.
It often happens that constraints cannot be satisfied.
This can be due to a too small $t$-parameter or because of conflicting overlapping constraints.
To alleviate the issue, we introduce weak constraints. 
All constraints were crafted such that the determinant of the corresponding matrix $M$ is strictly in the range $\{1,\ldots, b-1\}$.
For the $j$-th weak constraint, we now make these bounds depend on a variable $\nu_j \in [0,1]$ in the following manner:
\[
  \nu_j \leq \sum_i w_i x_i + k_jb \leq (b-1)\nu_j\,.
\]
This way, if $\nu_j = 0$ the constraint becomes $\sum_i w_i x_i + k_jb = 0$ and if $\nu_j = 1$ the constraint results to be $0 < \sum_i w_i x_i + k_jb < b$ which is equivalent to the original hard constraint.
Maximizing the sum $\sum w_{\nu_j}\nu_j$, where $w_{\nu_j}$ is the weight of constraint $j$, thus maximizes the number of satisfied sub-constraints while allowing some of them not to be satisfied in case they are infeasible.
The complete setup of the Integer Linear Program is summarized in Fig.~\ref{Fig:ILP}.

\subsection{Polynomial Integer Program}

A matrix $A \in \mathbb{F}_b^{(m-t)\times m}$ has full rank if and only if at least one of the square sub-matrices $A_i$ obtained by dropping any $t$ columns $i = (i_1, \dots, i_t)$ of $A$ has full rank.
This means that at least one such sub-matrix has a non-zero determinant.
This can be represented by the set of polynomial constraints of degree $m-t$, 
$$\exists i \in \{(i_1, \dots, i_t) \in\mathbb{N}^t\mid 1\leq i_1 < \dots < i_t \leq m\}: \text{ det}(A_i) \neq 0\,.$$

Imposing all $M_\mathbf{k}$ to have full rank can thus be expressed as satisfying a set of polynomial constraints on values of matrices $C_j $.
\[
      \begin{array}{c} 
        \forall \mathbf{k} := (k_1, \dots, k_s) \in \mathbb{N}^s \text{ with } \sum_{j=1}^s k_j = m-t \\
        \exists i \in \{(i_1, \dots, i_t) \in\mathbb{N}^t\mid 1\leq i_1 < \dots < i_t \leq m\}: \text{ det}(M_{\mathbf{k},i}) \neq 0\,.
      \end{array}
\]
with $M_{\mathbf{k},i}$ denoting the matrix $M_\mathbf{k}$ with the set of columns $i$ removed. 
The set of matrices $C_1, \ldots, C_s$ describes a (t,m,s)-net if and only if it satisfies a set of polynomial constraints stating that all $M_\mathbf{k}$ constructed from these matrices (see Sec.~\ref{Sec:Constraints}) have full rank.

\subsection{Integer Linear Programs}

Theorem~\ref{Thm:LinearEquations} advocates linear constraint solving, which, however, is NP-complete.
Unless $P = NP$, the computational complexity is exponential
in the number of variables $s m^2$.
In practice, such algorithms are infeasible, as the number of variables is too high.
However, assuming all values in the matrices to be known up to column $i-1$ and trying to determine the $i$-th column simplifies the problem to that of satisfying a set of linear constraints with $s m$ variables, since the determinant is a linear function of the matrix columns.
In combination with the heuristic applied in modern Integer Linear Program Solvers~\cite{CPLEX},
it becomes possible to compute tangible results.

We hence use a greedy algorithm to construct the generator matrices column-by-column:
\[
    \begin{pNiceMatrix}[margin]
    \Block[fill=green!35]{1-1}{}
    c_{1,1} & \cdots & c_{1,m} \\
    \Block[fill=myblue!35]{2-1}{}
    \vdots & \ddots & \vdots \\
    c_{m,1} & \cdots & c_{m,m} \\
    \end{pNiceMatrix}
    \quad \rightarrow \quad
    \begin{pNiceMatrix}[margin]
    \Block[fill=green!15]{2-2}{}
    c_{1,1} & \Block[fill=green!35]{2-1}{} \cdots & c_{1,m} \\
    \vdots & \ddots & \vdots \\
    c_{m,1} & \Block[fill=myblue!35]{1-1}{} \cdots & c_{m,m} \\
    \end{pNiceMatrix}
    \quad \rightarrow \quad
    \begin{pNiceMatrix}[margin]
    \Block[fill=green!15]{3-3}{}
    c_{1,1} & \cdots & \Block[fill=green!35]{3-1}{} c_{1,m} \\
    \vdots & \ddots & \vdots \\
    c_{m,1} & \cdots & c_{m,m} \\
    \end{pNiceMatrix}
\]
At each step, the top-left square sub-matrices are checked to
guarantee the sequence property (highlighted in green).
In the highlighted columns, elements in green are chosen at random according to the constraints,
while the blue elements are not constrained and become selected at random.
As our matrices have finite size, Definition~\ref{Def:tsSequence}
of a $(t,s)$-sequence is only ensured up to $b^m$ points. Hence, we call
point sets generated by such matrices \emph{progressive $(t,m,s)$-nets}.

To determine these linear constraints for each $M_\mathbf{k}$, we check whether the matrices have full rank by performing a \textit{symbolic} Gaussian elimination in the Galois field $\mathbb{F}_b$. Gaussian elimination seeks to triangularize a matrix by iteratively subtracting linear combinations of rows. For a rectangular matrix, the Gaussian elimination results in 3 possible outcomes:
\[
\begin{array}{ccc} %
\begin{pNiceMatrix}[margin]
*          & \Cdots & \Cdots & \Cdots & \Cdots & *         & v_1\\
0         & \Ddots &             & &            & \Vdots & \\
\Vdots & \Ddots & *          & \Cdots & \Cdots & *          & \\
\Vdots         &            & 0          & \Cdots & \Cdots & 0         &  \Vdots \\
0         & \Cdots & \Cdots & \Cdots & \Cdots & 0         & v_{m - t}
\end{pNiceMatrix}
\; & \;
\begin{pNiceMatrix}[margin]
*          & & \Cdots & \Cdots & \Cdots & *         & v_1\\
0         & \Ddots  &            &          &  & \Vdots          & \\
\Vdots & \Ddots  & \Ddots &           &  & \Vdots         &  \Vdots \\
\Vdots &             & \Ddots &  \Ddots         &  & \Vdots & \Vdots \\
0         & \Cdots  & \Cdots & 0         & * & *         & v_{m - t}
\end{pNiceMatrix}
\; & \;
\begin{pNiceMatrix}[margin]
*          & & \Cdots & \Cdots & \Cdots & *         & v_1\\
0         & \Ddots  &            &          &  & \Vdots          & \\
\Vdots & \Ddots  & \Ddots &           &  & \Vdots         &  \Vdots \\
\Vdots &             & \Ddots &  *         & \Cdots & * & \Vdots \\
0         & \Cdots  & \Cdots & 0         & \Cdots & 0         & v_{m - t}
\end{pNiceMatrix}
\\
(a) & (b) & (c)
\end{array}
\]
In case (a), no choice of $v_j$ can make the matrix full rank since at least two rows consist of zeros, possibly except for their last component $v_{m-t}$ and $v_{m-t-1}$, and hence these rows must be linearly dependent.
In case (b), the matrix necessarily has full rank regardless of the last column $v$  as the row vectors are linearly independent even without their last component.
In case (c), the matrix has full rank if and only if $v_{m-t} \neq 0$ to ensure the last row is not identically zero.
Following the Gaussian elimination process, $v_{m-t}$ is a linear combination of the variables of the last column of $M_\mathbf{k}$ with weights depending on the values of the first columns of $M_\mathbf{k}$.

\begin{figure}\centering
  \begin{overpic}[width=6cm]{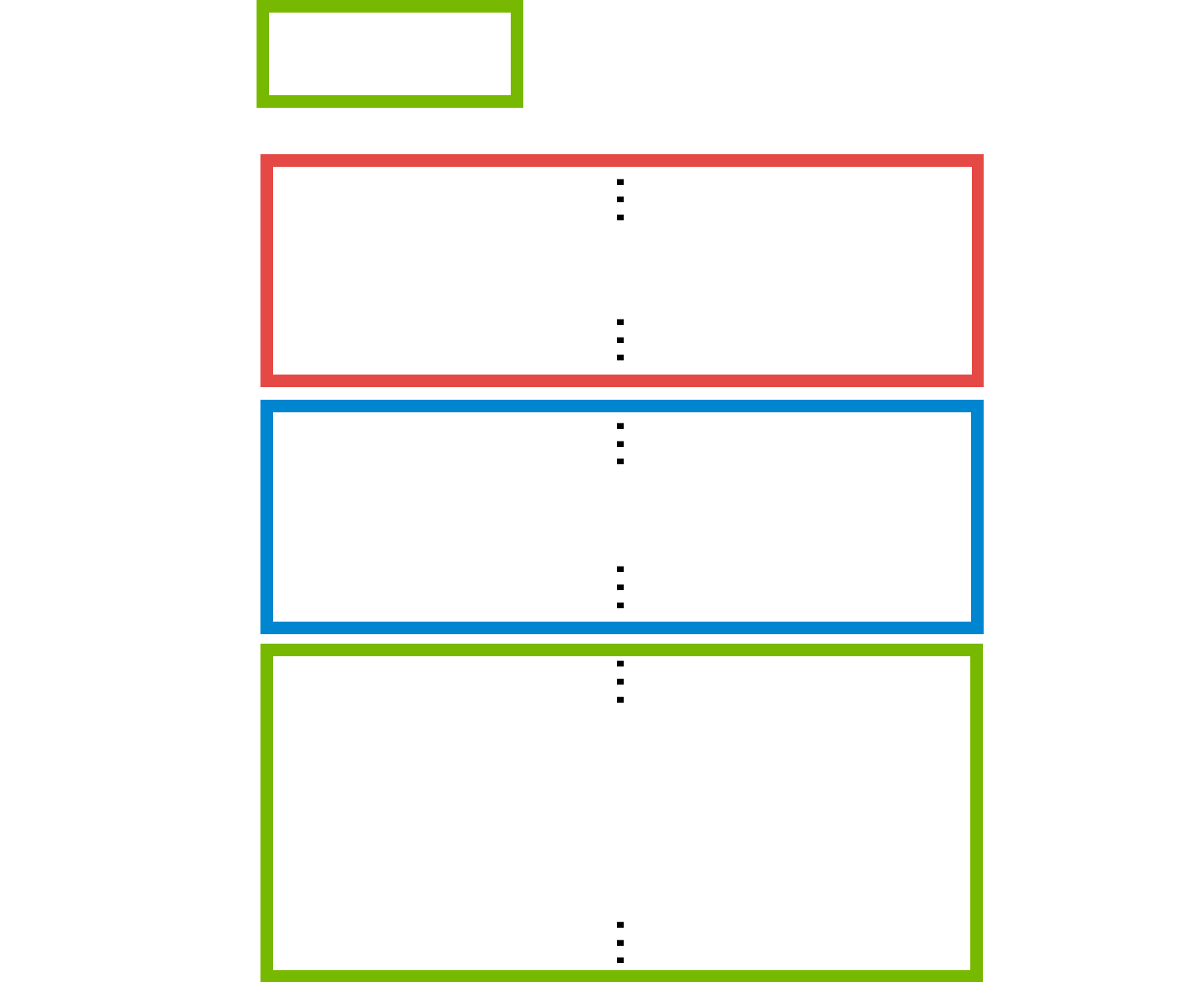}\small
    \put(0,76){Maximize}
    \put(24.5,76){$\sum w_{\nu_j} \nu_j$}
    \put(0,65){such that}
    \put(90,45){\rotatebox{270}{Hard}}
    \put(86,49){\rotatebox{270}{constraints}}
    \put(90,22){\rotatebox{270}{Weak}}
    \put(86,25){\rotatebox{270}{constraints}}
    \put(85,66){\rotatebox{270}{Ranges}}
    \put(28,37.5){$0 < \sum_i w_i x_i - k_jb < b$}
    \put(30,19){$\nu_j \leq \sum_i w_i x_i + k_jb$}
    \put(23.5,13){$\sum_i w_i x_i + k_jb \leq (b-1)\nu_j$}
    \put(41,7){$ 0\leq \nu_j \leq 1$}
    \put(40,58){$0 \leq x_i < b$}
  \end{overpic}
    \caption{Anatomy of an Integer Linear Program (ILP). Please refer
      to the text for details.}
      \label{Fig:ILP}
\end{figure}

In summary, to
grow the matrices $C_i$ according to our greedy strategy, 
the values of the $c^{(i)}_{l,m+1}$ (abstracted as $x_i$ in our formulas) in the last column of
their respective $C_i$ are determined by solving an Integer Linear Program,
which consists of an objective function to maximize subject
to a set of constraints. Fig.~\ref{Fig:ILP} shows the anatomy of our Integer Linear Programs:
The range constraints enforce that $c^{(i)}_{l,m+1} \in \{0, \ldots, b -1\}$
and the hard uniformity constraints enforce a non-zero
determinant
to guarantee the design constraints of stratification,
net, and sequence properties as introduced in Sec.~\ref{Sec:Constraints}.
Remember that matrices $M_\mathbf{k}$ are constructed from %
the first rows of the $C_i$ matrices and hence include some of the $c^{(i)}_{l,m+1}$.
Indicated by $\nu_j = 1$, a satisfied weak constraint adds its
weight $w_{\nu_j}$ to the objective function. Otherwise, a zero linear combination (stating that the corresponding $M_\mathbf{k}$ is not full rank) %
comes along with $\nu_j = 0$.

\section{Results} \label{Sec:Results}

In our previous work~\cite{matbuilder2022}, we focused on
computer graphics applications including image synthesis, 
parametric texture exploration, and optimal control.
At that time, we only supported the special cases of stratification and $(0,m,s)$-net properties,
where $t=0$. For our new results, we investigate the larger solution space for generator matrices
provided by $t\geq0$ and weak constraints. The implementation of
the MatBuilder software is publicly available at \url{https://github.com/loispaulin/matbuilder}.

Fig.~\ref{fig:full6d} shows an initial experiment in $s = 6$ dimensions.
We observe that both MatBuilder and LatNetBuilder~\cite{latnetbuilder}
achieve good performance in terms of discrepancy when maximizing
the uniformity. Note that the \texttt{One-weak-constraint}
profile in Fig.~\ref{Fig:Profiles} causes the solver to approximate a progressive $(0, m, 6)$-net
in base $b=3$ as closely as possible, while theoretically it does not exist.

\begin{figure} 
  \centering
  \includegraphics[height=0.4\linewidth]{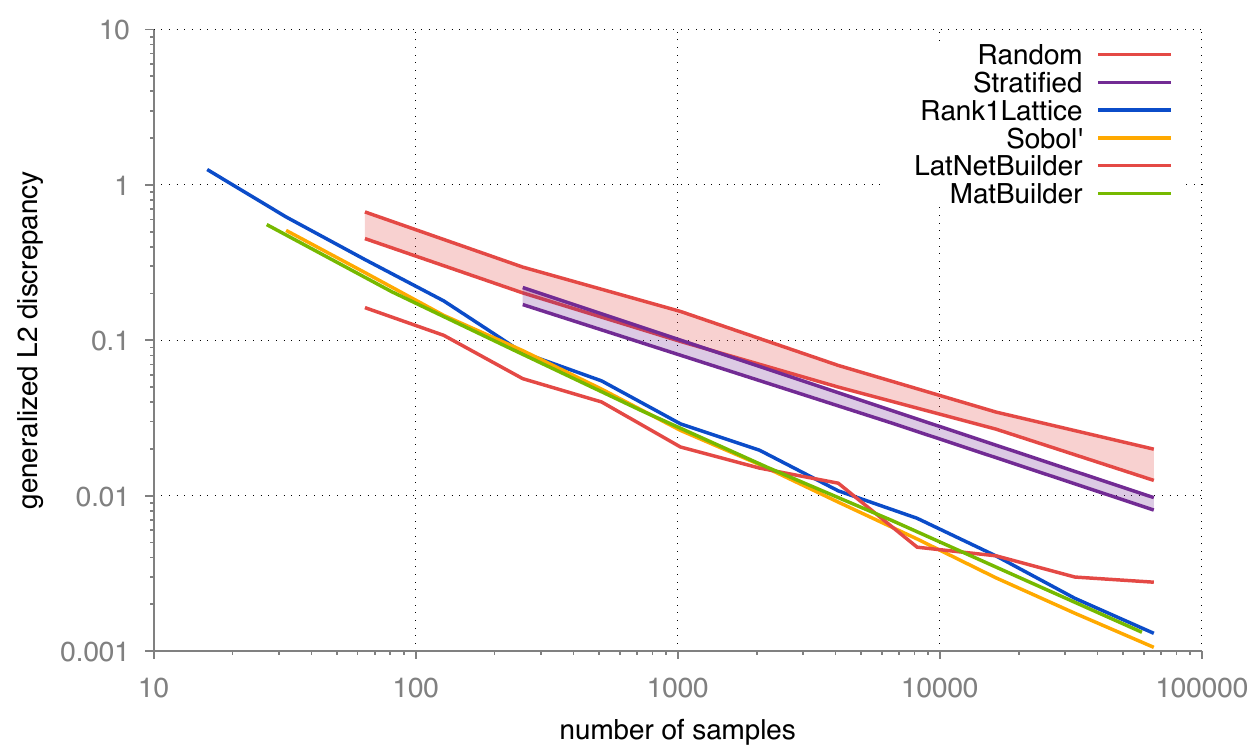}
  \caption{We measure uniformity in terms generalized
      $l_2$-discrepancy in dimension $s=6$. The MatBuilder ($b = 3$) results are
      obtained using a \texttt{"weak 1 net 0 1 2 3 4 5"} constraint in the \texttt{One-weak-constraint} profile in Fig.~\ref{Fig:Profiles}.
      For LatNetBuilder ($b = 2$), we have used a figure of merit
      minimizing the discrepancy. Rank1Lattice refers to \cite{keller2004stratification}.
      For reference, stratified sampling partitions each dimension into the same amount of intervals and randomly samples once inside each resulting hypercube.
      The range for random and stratified sampling results from 64 independent realizations.}
  \label{fig:full6d}
\end{figure}

\subsection{Overlapping Net Constraints}

In Fig.~\ref{Fig:Projective}, we present
results for the \texttt{Generic-proj-LDS} profile in Fig.~\ref{Fig:Profiles} which ensures
progressive $(0,m,2)$-net properties for consecutive pairs of dimensions. For all other pairs
of dimensions, the additional weak constraints ask the solver to establish a
progressive $(0,m,2)$-net  property if possible. Using weak
constraints in the profile maximizes the number of elementary intervals
checking their part of the progressive $(0,m,2)$-net property.
This results in points that, even though technically not a progressive $(0,m,2)$-net, exhibit a similar quality in terms of discrepancy.
As compared to the
Sobol' sequence, the constraint based generator matrices
clearly improve the quality across the 2D
sample projections. Converting these constraints into a
loss function for a stochastic matrix construction,
LatNetBuilder~\cite{latnetbuilder} does not achieve a comparable
quality in the projections on its own. Yet, a constraint based specification
may help stochastic optimization~\cite{matbuilder2022}.

\begin{figure} %
  \centering
  \begin{tabular}{ccc}
  \includegraphics[height=0.32\linewidth]{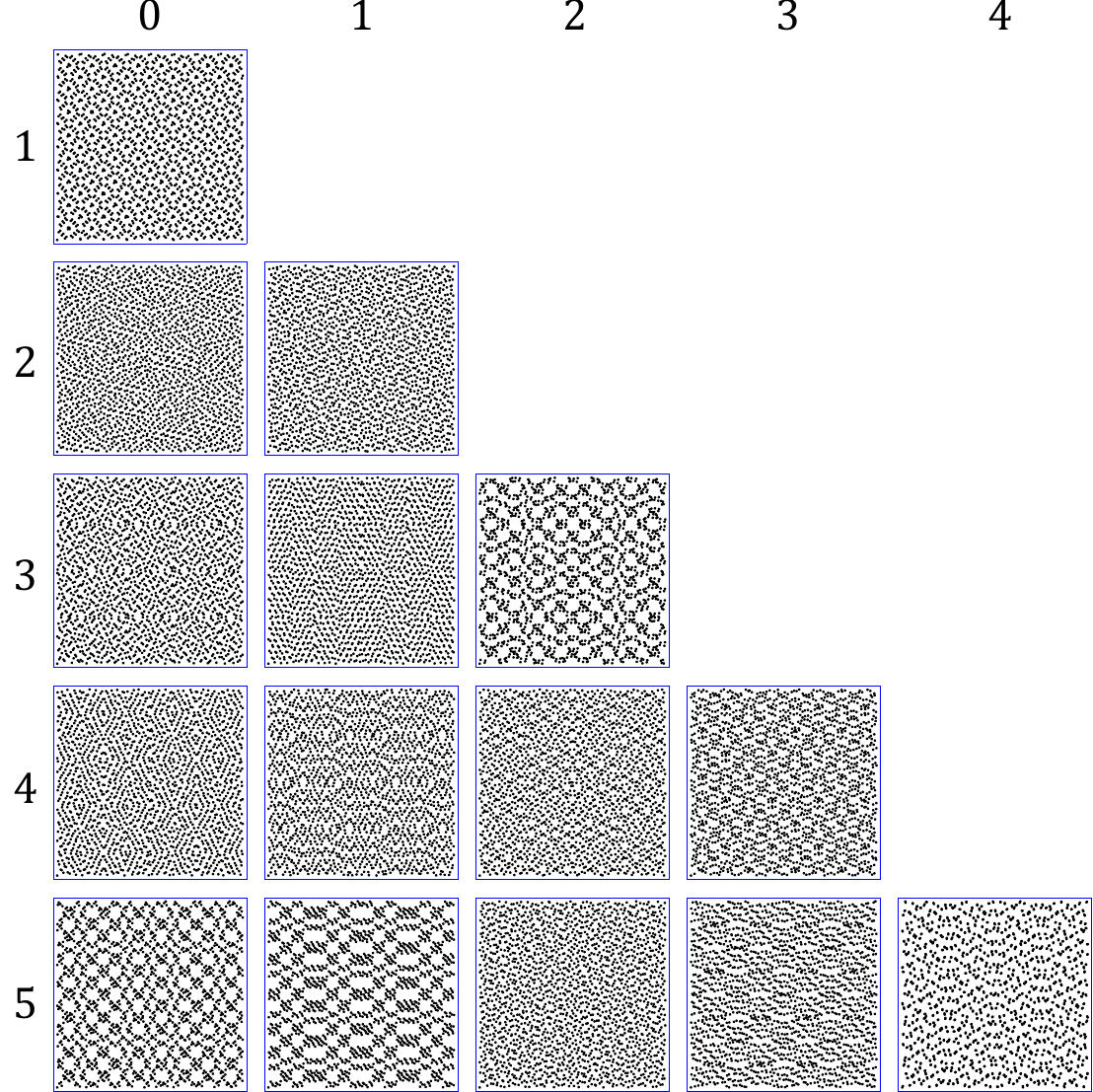} &
  \includegraphics[height=0.32\linewidth]{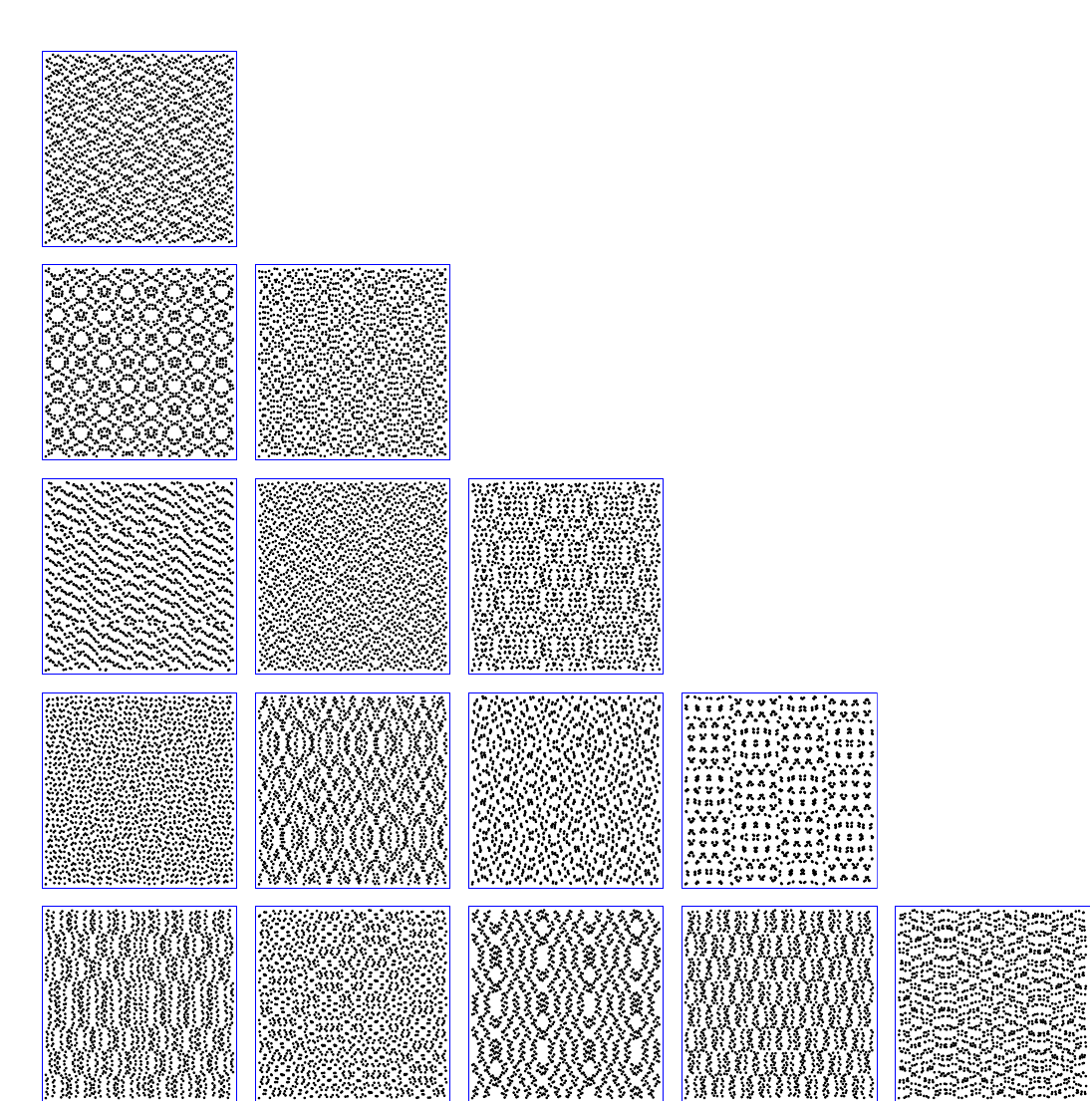}  &
  \includegraphics[height=0.32\linewidth]{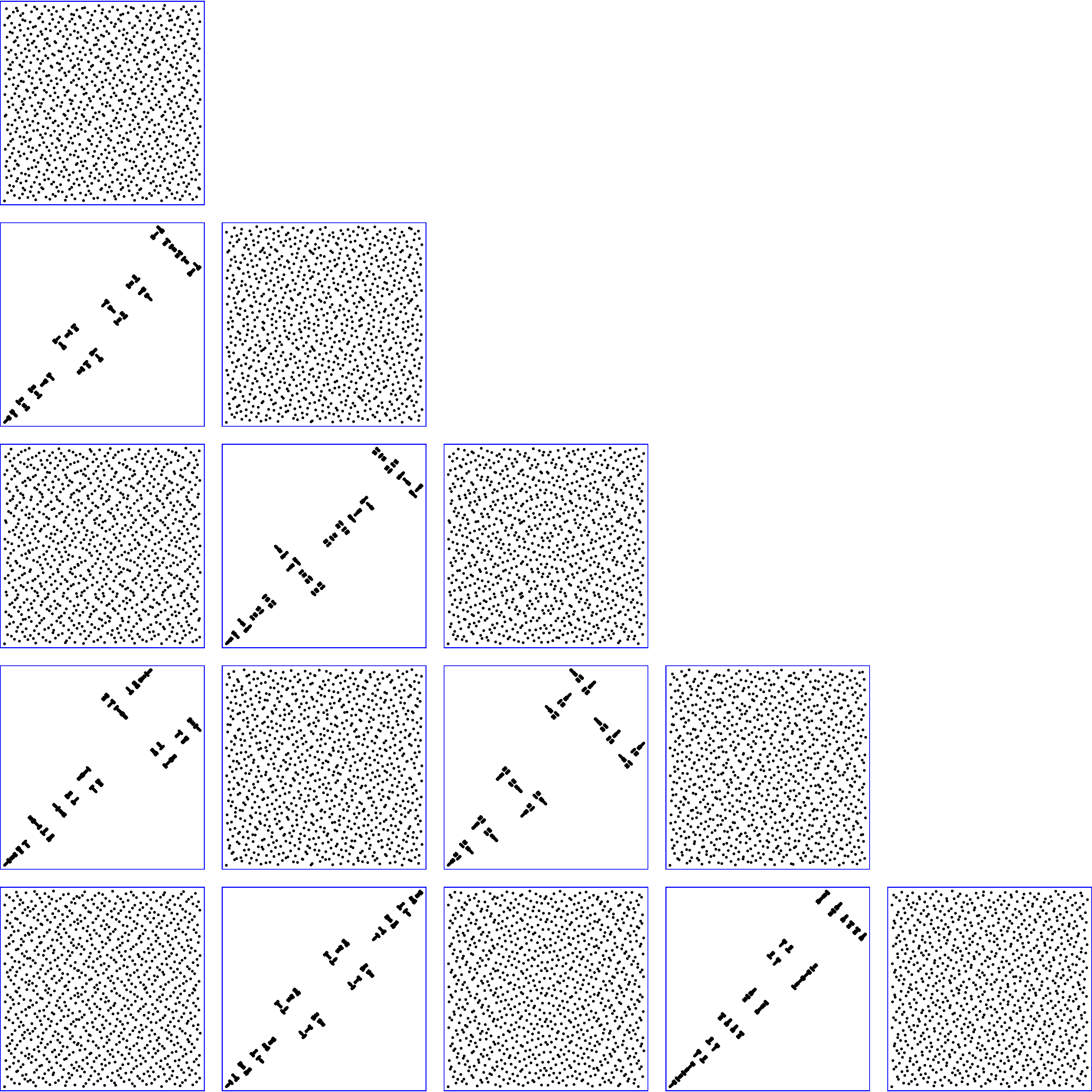}\\
  Sobol' sequence, $b = 2$ &  LatNetBuilder, $b = 2$ & MatBuilder, $b = 2$ \\
  \end{tabular}
  \begin{tabular}{cc}
  \raisebox{.7cm}{\includegraphics[height=0.32\linewidth]{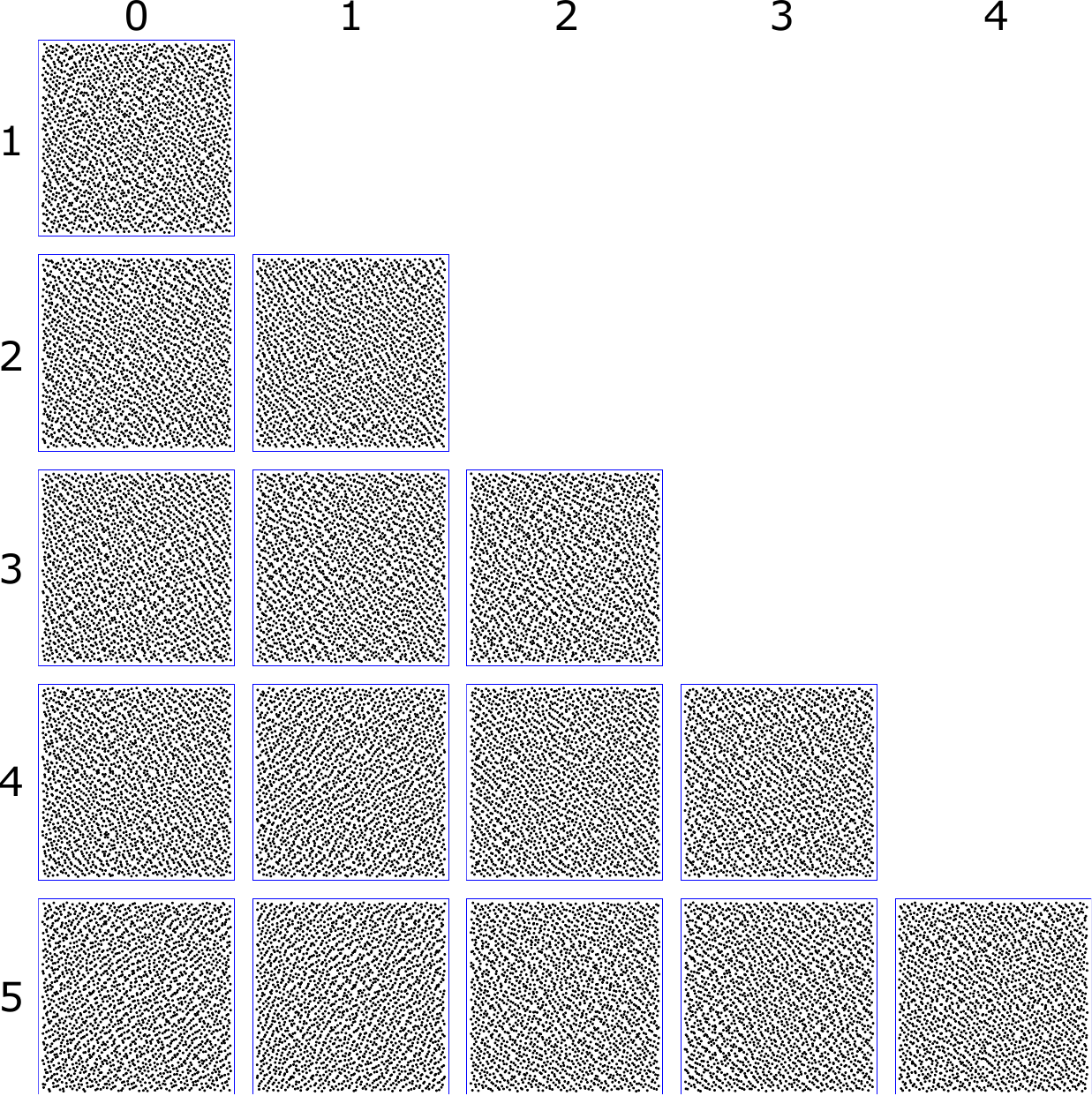}} & \includegraphics[height=0.4\linewidth]{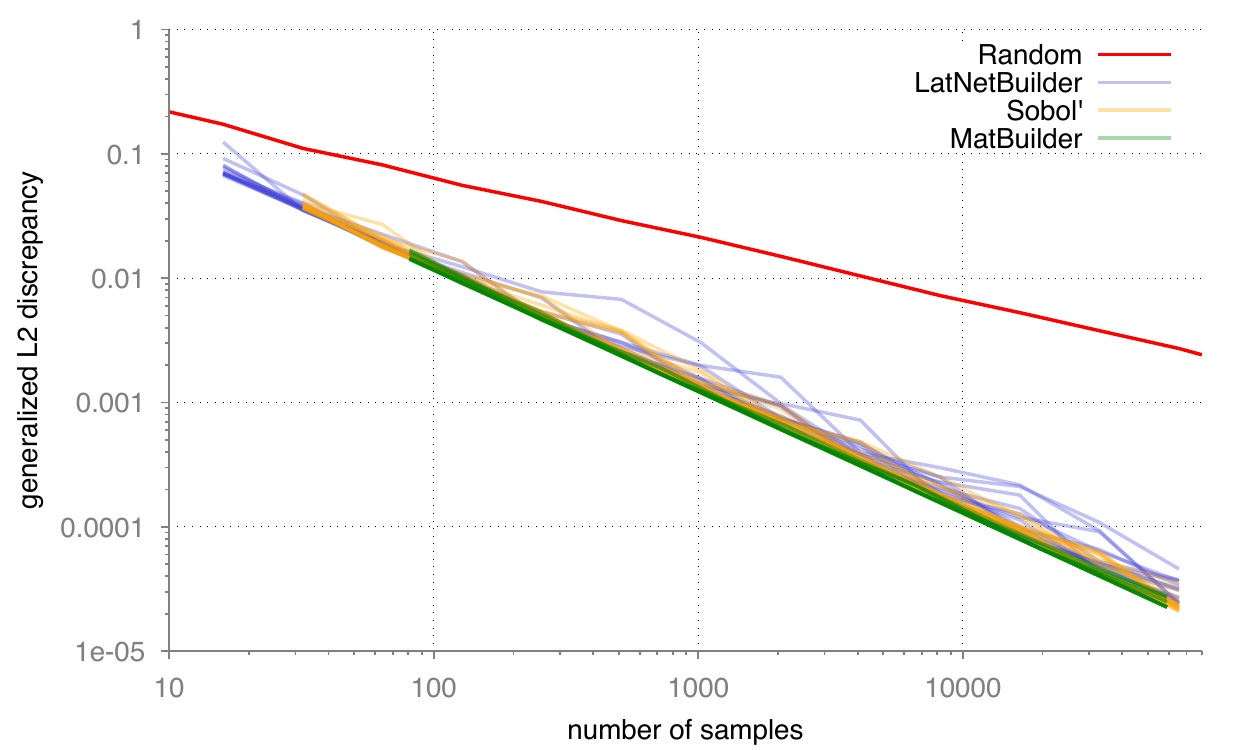}\\
  MatBuilder, $b = 3$ & Discrepancy of two-dimensional projections
  \end{tabular}
  \caption{Comparison of two-dimensional projections. For 2048 points generated in base $b = 2$, both the Sobol' sequence and
  the result from LatNetBuilder show the known typical patterning. Both do not fulfill
  $(0,m,2)$-net constraints on pairs of consecutive dimensions. Given the \texttt{Generic-proj-LDS} profile in Fig.~\ref{Fig:Profiles},
  MatBuilder can enforce such constraints. Then, off-diagonal projections may lack uniformity as a consequence of Theorem~\ref{Thm:Pairs}.
  However, using the same profile for $b = 3$ and a similar number of 2187 points,  MatBuilder finds high quality matrices
  that satisfy both the hard and weak constraints. The graph simultaneously plots the generalized $l_2$-discrepancy
  of all two-dimensional projections for random sampling, the Sobol' sequence, the LatNetBuilder result,
  and the MatBuilder result in $b = 3$. The generator matrices specified by constraints
  consistently generate points of excellent low discrepancy with the least variation across all projections.}
  \label{Fig:Projective}
\end{figure}

\subsection{Playing with $t$-Parameters}

Our system empowers the user to play with $t$-parameters provided as weak constraints.
In order to satisfy a weak constraint, the greedy algorithm maximizes the number of elementary intervals of size $b^{-t}$ containing $b^t$ points, approximating the properties of $t$-parameters that are theoretically impossible.
For example, in base $b=3$ the best possible $t$-parameter for a
progressive $(t,m,6)$-net is $t=3$. However, by asking the solver to generate matrices with $t \in \{0, 1, 2\}$ as a weak constraint, we are able to improve uniformity.

In Fig.~\ref{Fig:Perf-t}, we demonstrate that low discrepancy can be achieved by weak constraints for the examples of $t \in \{0, \ldots, 4\}$ and $m$ up to 10 in $s = 6$ dimensions.
As expected, increasing the $t$-parameter
has a negative impact on the six-dimensional generalized $l_2$-discrepancy
(Fig.~\ref{Fig:Perf-t}-$a$) and the sample projection uniformity
(Fig.~\ref{Fig:Projective2}). While the construction time increases
with the matrix size $m$, the number of
constraints to satisfy decreases with increasing $t$-parameter. Hence, the smaller $t$, the
more greedy expansion steps of the matrix columns and rows need to be executed as $m$ increases (Fig.~\ref{Fig:Perf-t}-$b$).

\begin{figure}[!h]
  \subfigure[]{\includegraphics[width=0.5\textwidth]{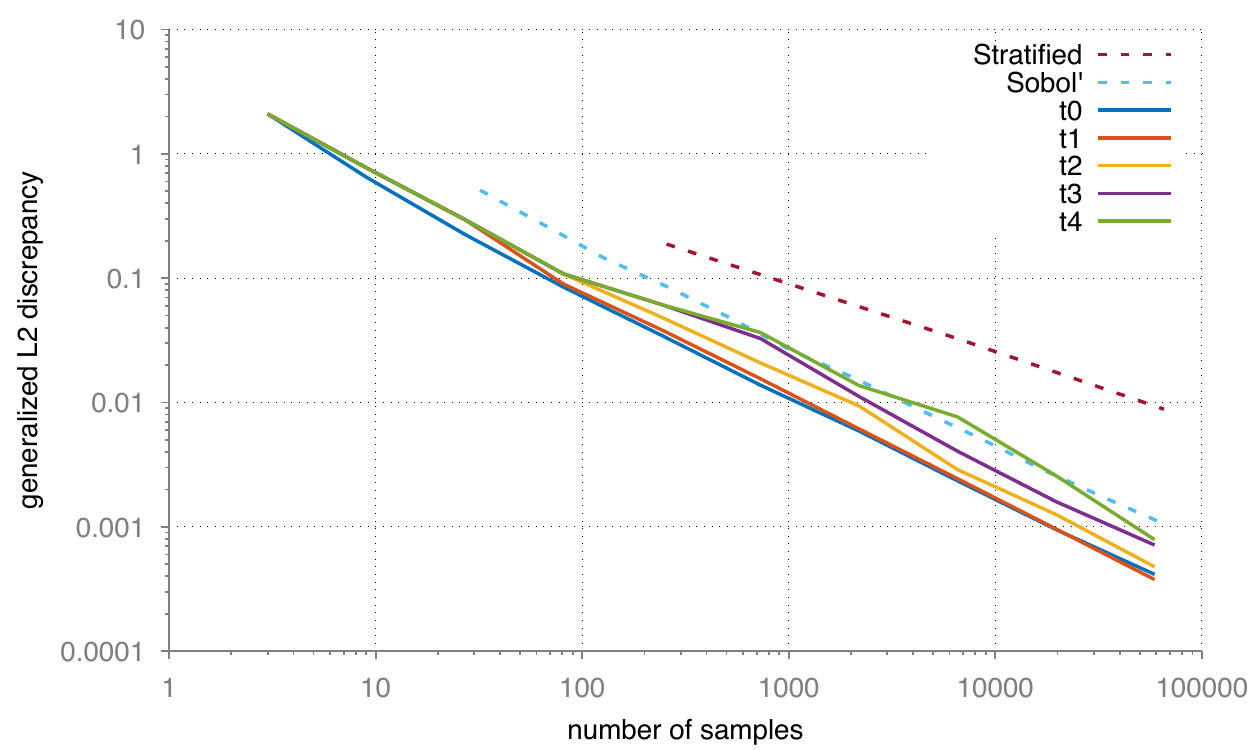}}
  \subfigure[]{\includegraphics[width=0.5\textwidth]{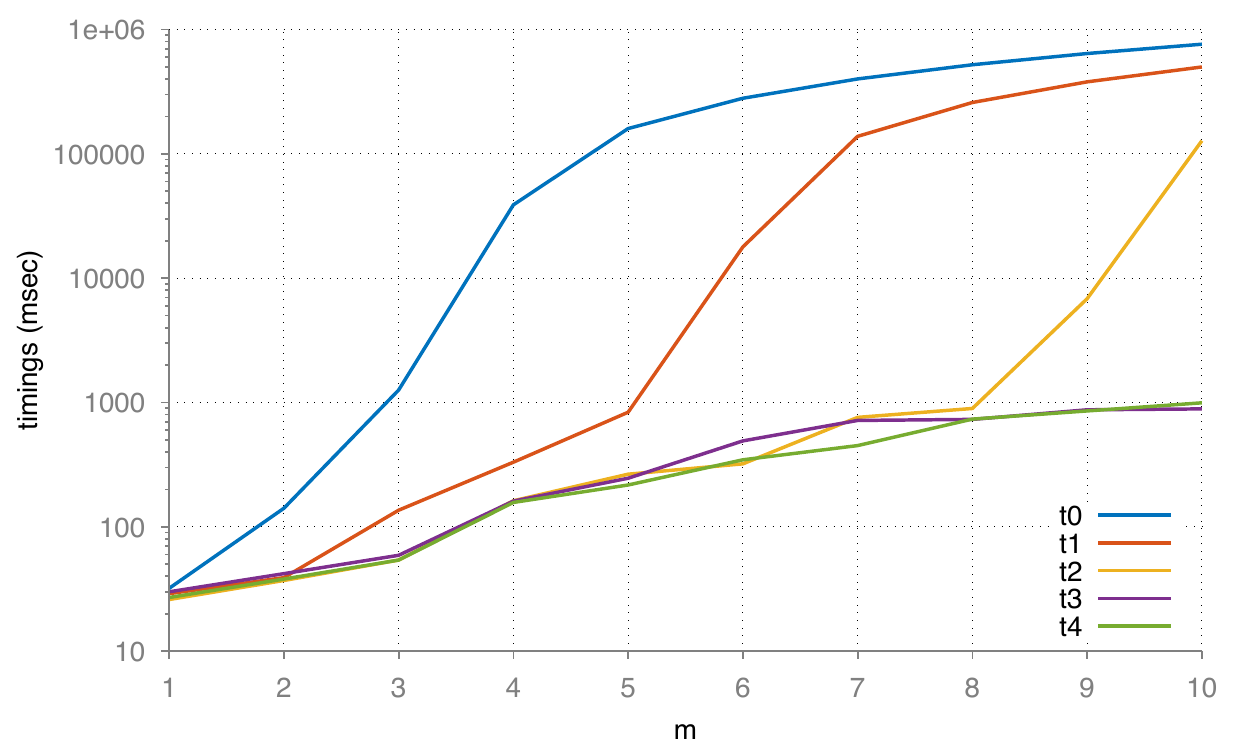}}
  \caption{Performance evaluation when increasing the
    $t$-parameter from 0 to 4 on a weak net profile in base 3 and
    dimension 6 up to $3^{10}$ points: $(a)$ Quality
    evaluation in terms of generalized $l_2$-discrepancy.  $(b)$ Timings of the
    solver as a function of the matrix size $m$.}
  \label{Fig:Perf-t}
\end{figure}

\begin{figure}[!h]\centering
  \subfigure[]{\includegraphics[width=0.32\textwidth]{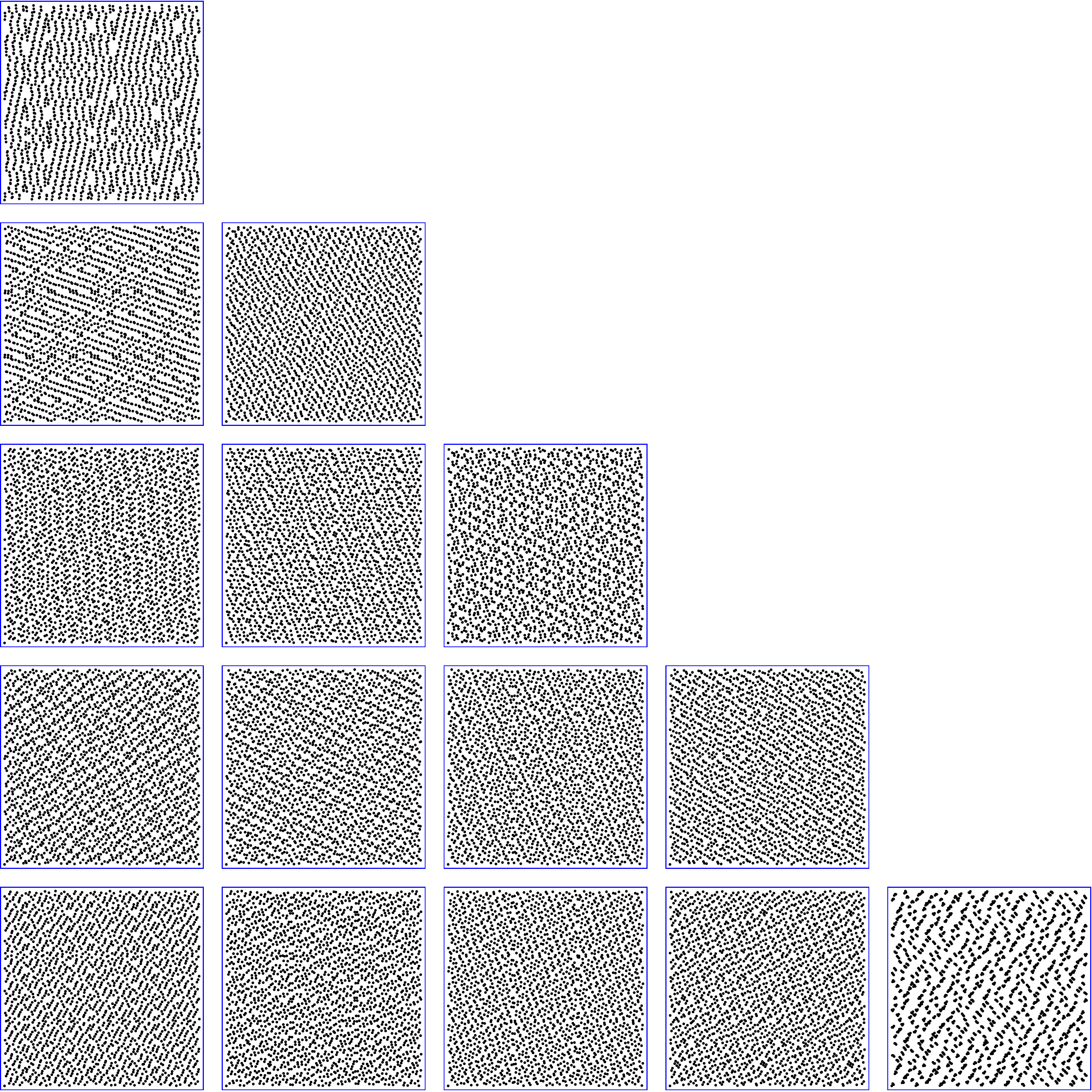}}
  \subfigure[]{\includegraphics[width=0.32\textwidth]{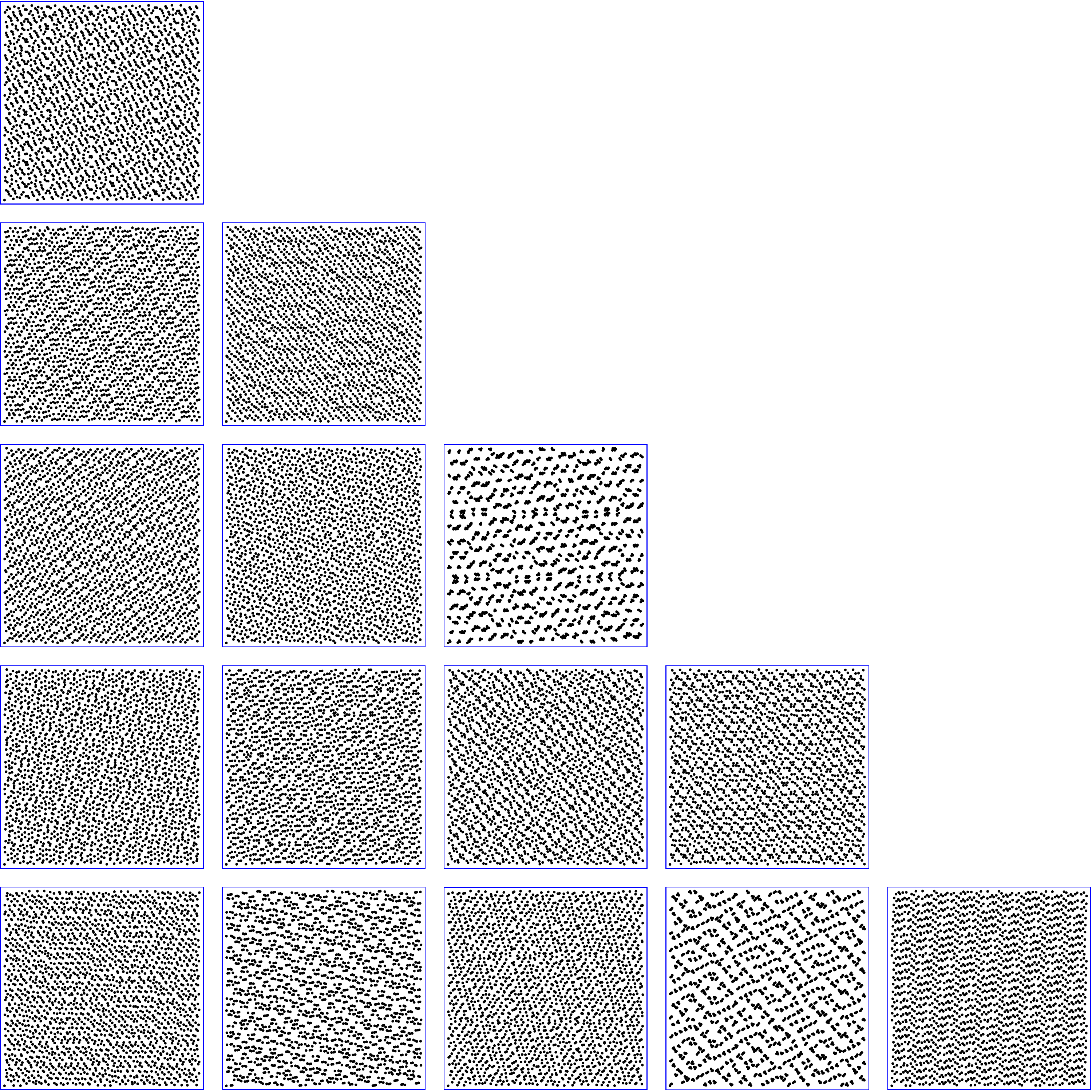}}
  \subfigure[]{\includegraphics[width=0.32\textwidth]{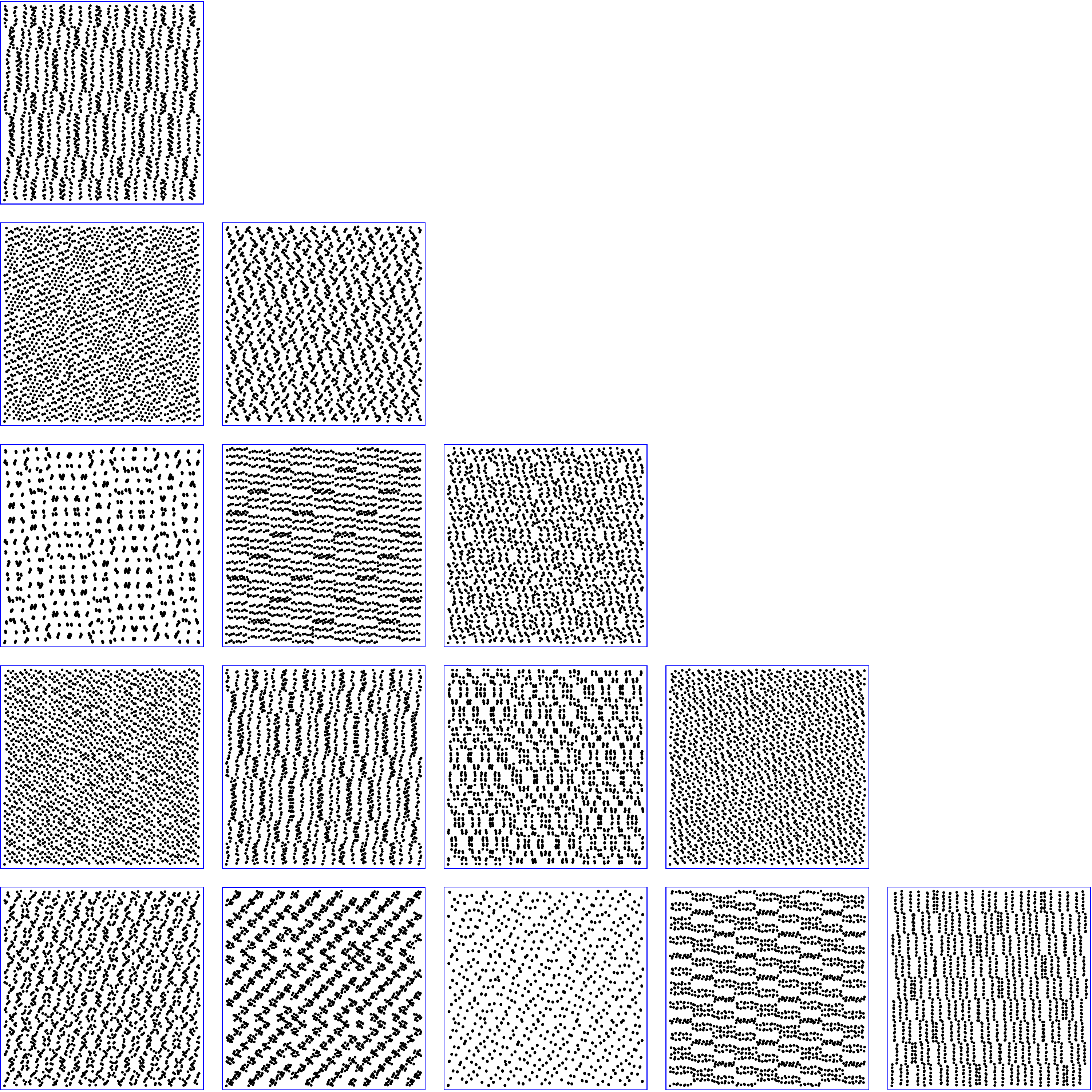}} \\
  \subfigure[]{\includegraphics[width=0.32\textwidth]{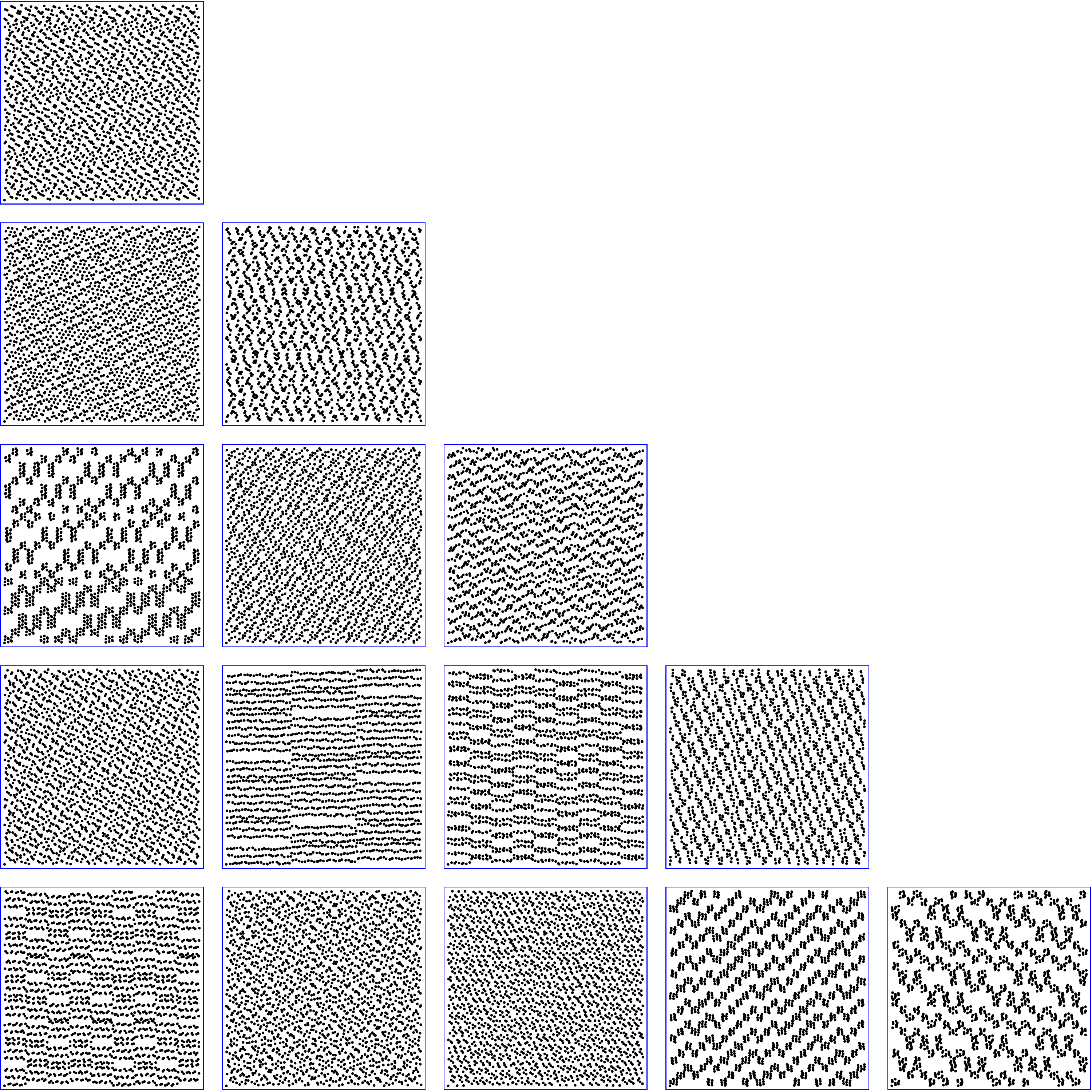}}
  \subfigure[]{\includegraphics[width=0.32\textwidth]{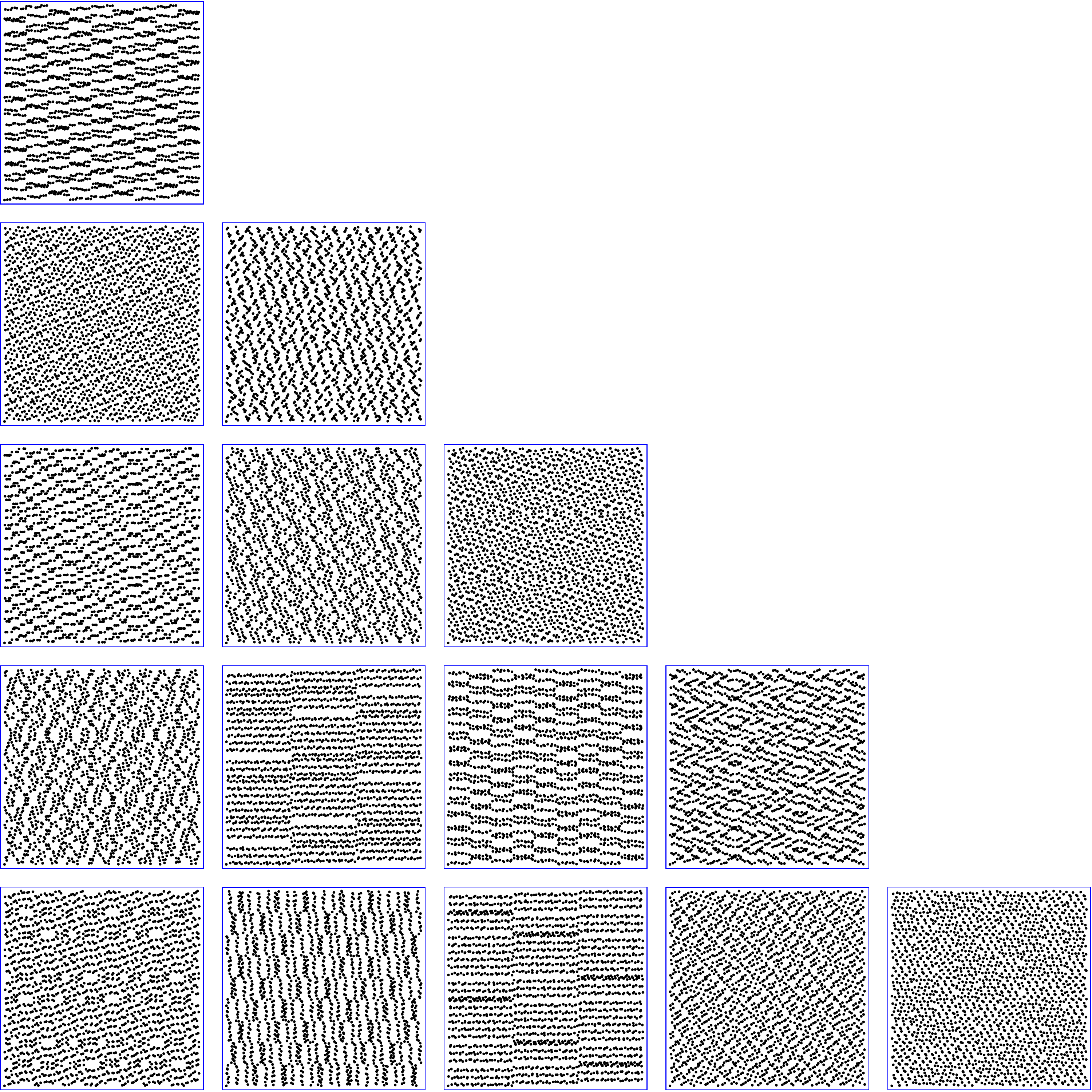}}
  \caption{Two-dimensional projections of $3^7$ points in dimension 6 following 
    \texttt{"weak 1 net ti 0 1 2 3 4 5"} profiles for \texttt{i} $\in \{0,1,2,3,4\}$ (from
    $(a)$ to $(e)$). Note that dimension indices follow the ones
    depicted in Fig.\ref{Fig:Projective}.}
  \label{Fig:Projective2}
\end{figure}

  \begin{figure}
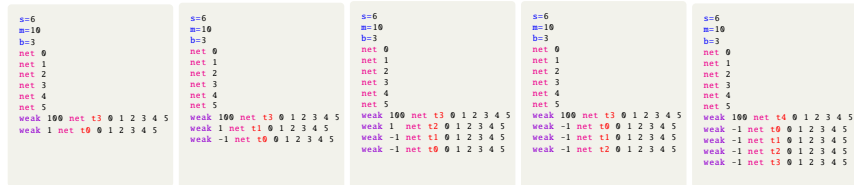

    \begin{center}
      \begin{minipage}{2.2cm}
        \begin{tcolorbox}[colback=backcolour,boxsep=-3mm,colframe=backcolour,arc=0mm]
          \begin{lstlisting}[style=mystyle3]
s=6
m=10
b=3
net 0
net 1
net 2
net 3
net 4
net 5
weak 100 net t3 0 1 2 3 4 5
weak 1 net t0 0 1 2 3 4 5




          \end{lstlisting}
        \end{tcolorbox}
      \end{minipage}
      \begin{minipage}{2.2cm}
        \begin{tcolorbox}[colback=backcolour,boxsep=-3mm,colframe=backcolour,arc=0mm]
          \begin{lstlisting}[style=mystyle3]
s=6
m=10
b=3
net 0
net 1
net 2
net 3
net 4
net 5
weak 100 net t3 0 1 2 3 4 5
weak 1 net t1 0 1 2 3 4 5
weak -1 net t0 0 1 2 3 4 5



          \end{lstlisting}
        \end{tcolorbox}
      \end{minipage}
      \begin{minipage}{2.2cm}
        \begin{tcolorbox}[colback=backcolour,boxsep=-3mm,colframe=backcolour,arc=0mm]
          \begin{lstlisting}[style=mystyle3]
s=6
m=10
b=3
net 0
net 1
net 2
net 3
net 4
net 5
weak 100 net t3 0 1 2 3 4 5
weak 1  net t2 0 1 2 3 4 5
weak -1 net t1 0 1 2 3 4 5
weak -1 net t0 0 1 2 3 4 5


          \end{lstlisting}
        \end{tcolorbox}
      \end{minipage}
      \begin{minipage}{2.2cm}
        \begin{tcolorbox}[colback=backcolour,boxsep=-3mm,colframe=backcolour,arc=0mm]
          \begin{lstlisting}[style=mystyle3]
s=6
m=10
b=3
net 0
net 1
net 2
net 3
net 4
net 5
weak 100 net t3 0 1 2 3 4 5
weak -1 net t0 0 1 2 3 4 5
weak -1 net t1 0 1 2 3 4 5
weak -1 net t2 0 1 2 3 4 5


          \end{lstlisting}
        \end{tcolorbox}
      \end{minipage}
      \begin{minipage}{2.2cm}
        \begin{tcolorbox}[colback=backcolour,boxsep=-3mm,colframe=backcolour,arc=0mm]
          \begin{lstlisting}[style=mystyle3]
s=6
m=10
b=3
net 0
net 1
net 2
net 3
net 4
net 5
weak 100 net t4 0 1 2 3 4 5
weak -1 net t0 0 1 2 3 4 5
weak -1 net t1 0 1 2 3 4 5
weak -1 net t2 0 1 2 3 4 5
weak -1 net t3 0 1 2 3 4 5
          \end{lstlisting}
        \end{tcolorbox}
      \end{minipage}
    \end{center}
    \caption{MatBuilder profiles exploring negatively weighted
        $t$ constraints: While enforcing the progressive net to be as
        $t=3$ as possible (and $t=4$ for the last one), we
      progressively invalidate some smaller $t$-parameter options.}
    \label{fig:tonlyprofiles}
  \end{figure}
  
Weighted weak constraints also enable us to negatively weigh a net constraint. While this seems to have little practical purpose, it allows one to explore the range of possible net configurations.
We revisit the example of sequences in 6 dimensions in base 3 where the smallest feasible $t$-parameter is $3$,
with the  profiles given in Fig.~\ref{fig:tonlyprofiles}.
One may understand such profiles as for instance ``What is the best
$t=3$ point set that is neither $t=0$, $1$, nor $2$?''. In
Figs.~\ref{Fig:Perf-t_only} and~\ref{Fig:NegativeNets}, we observe
that block artifacts on the projections have a direct impact on the
generalized $l_2$-discrepancy (with relatively similar timings for
the matrix construction). However,
Fig.~\ref{Fig:NegativeNets} shows that the $(3,6)$-sequence property
does not differentiate between points of high or low quality.

\begin{figure}[!h]
  \subfigure[]{\includegraphics[width=0.5\textwidth]{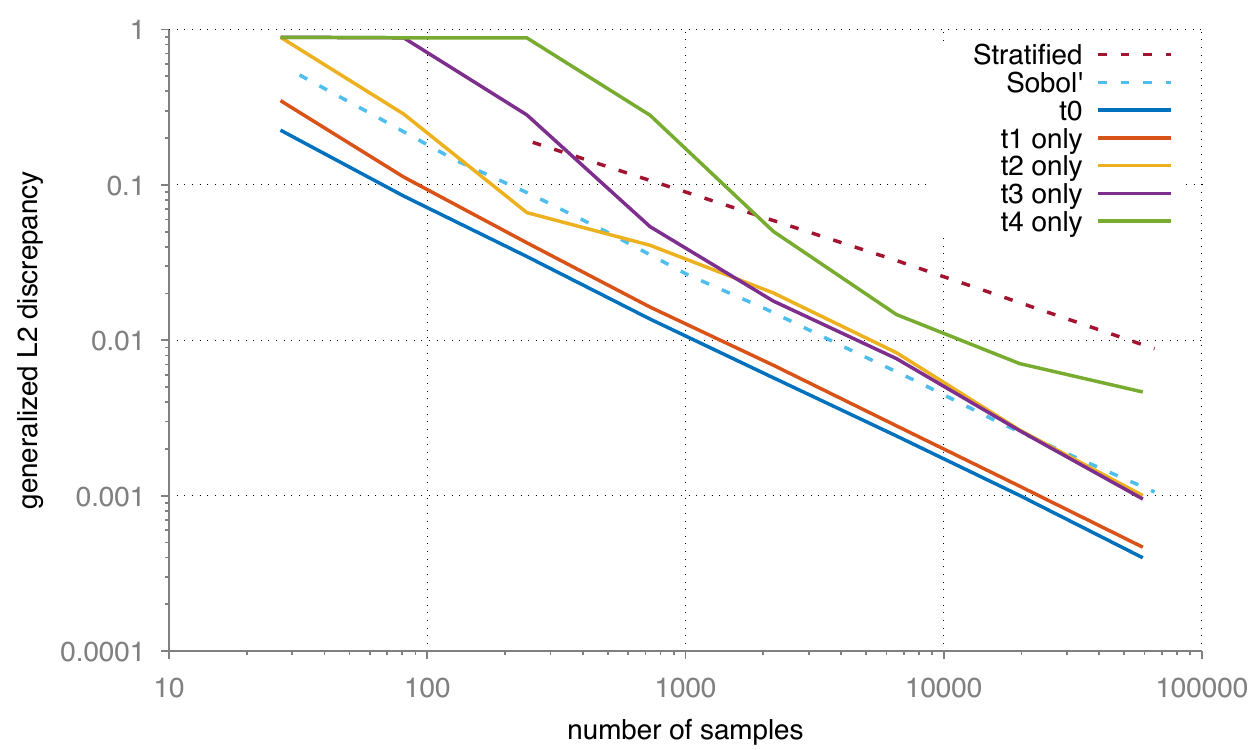}}
  \subfigure[]{\includegraphics[width=0.5\textwidth]{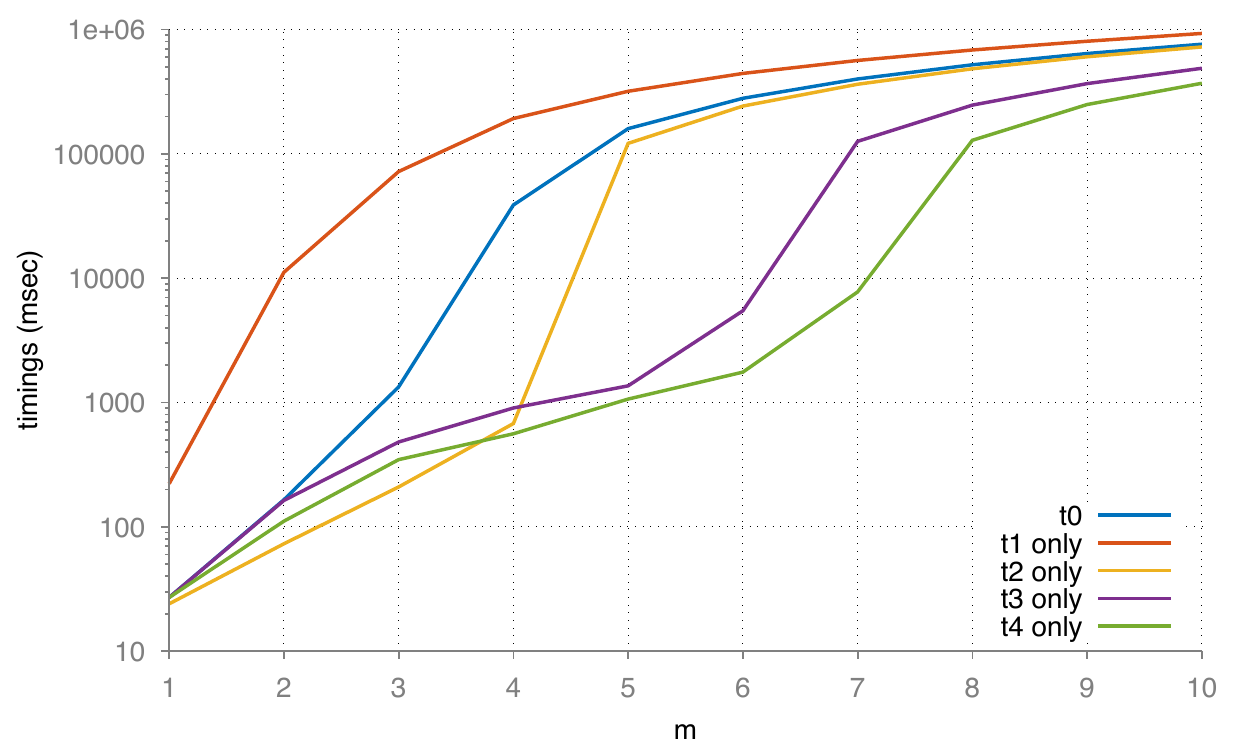}}
  \caption{Performance evaluation of the profiles given in Fig.~\ref{fig:tonlyprofiles}: $(a)$ Quality
    evaluation in terms of generalized $l_2$-discrepancy.  $(b)$ Timings of the
    solver as a function of the matrix size $m$.}
  \label{Fig:Perf-t_only}
\end{figure}

\begin{figure}[!h]\centering
  \subfigure[]{\includegraphics[width=0.32\textwidth]{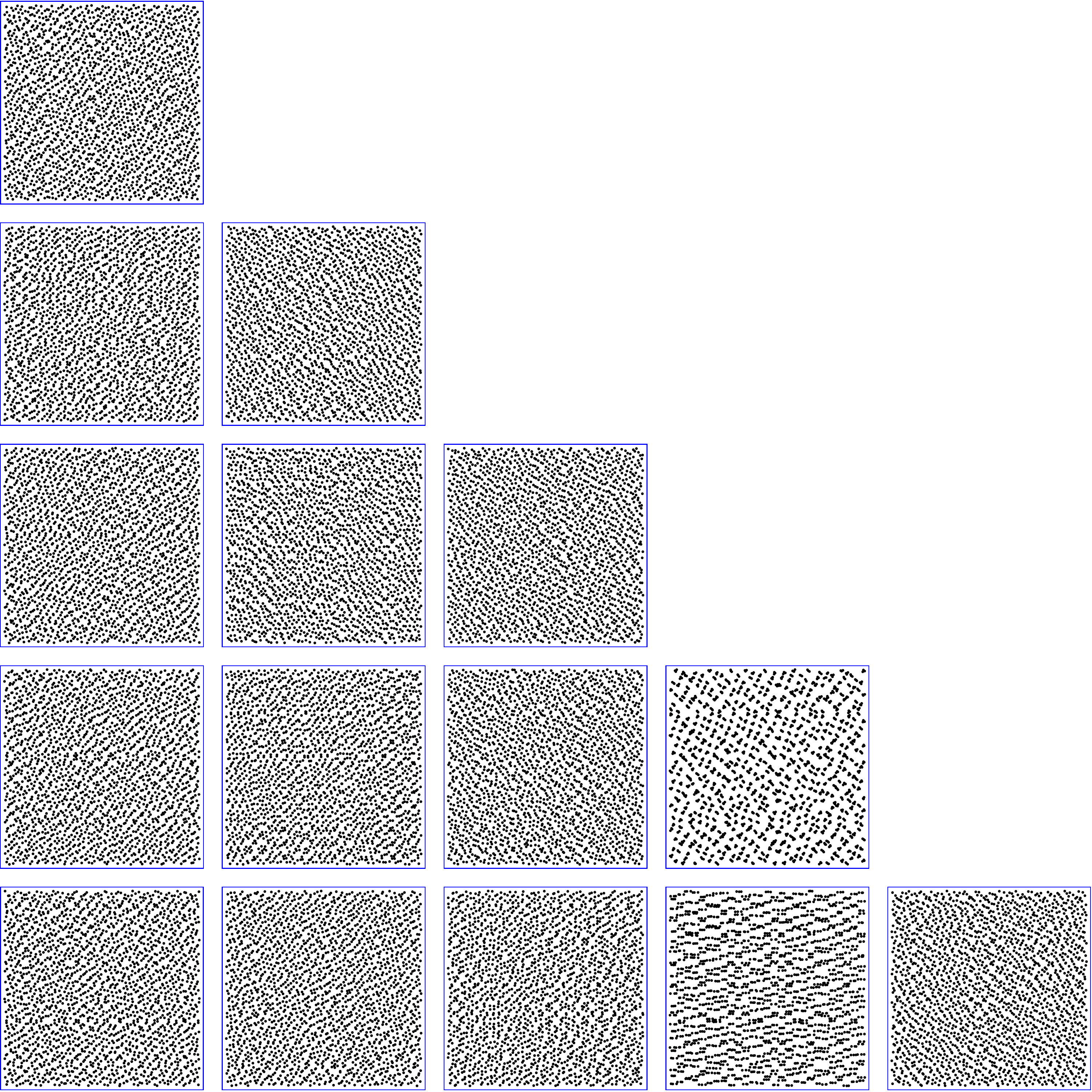}}
  \subfigure[]{\includegraphics[width=0.32\textwidth]{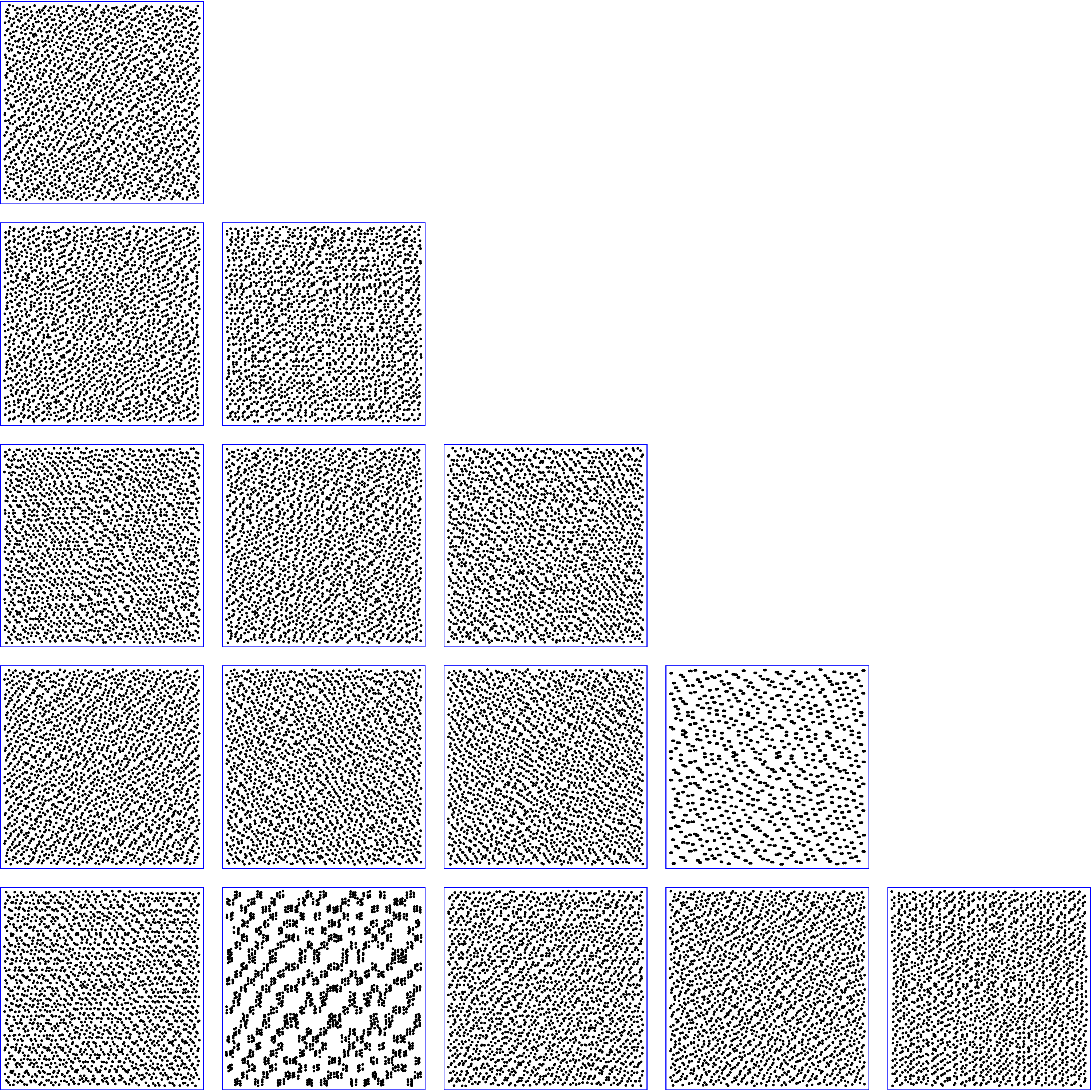}}
  \subfigure[]{\includegraphics[width=0.32\textwidth]{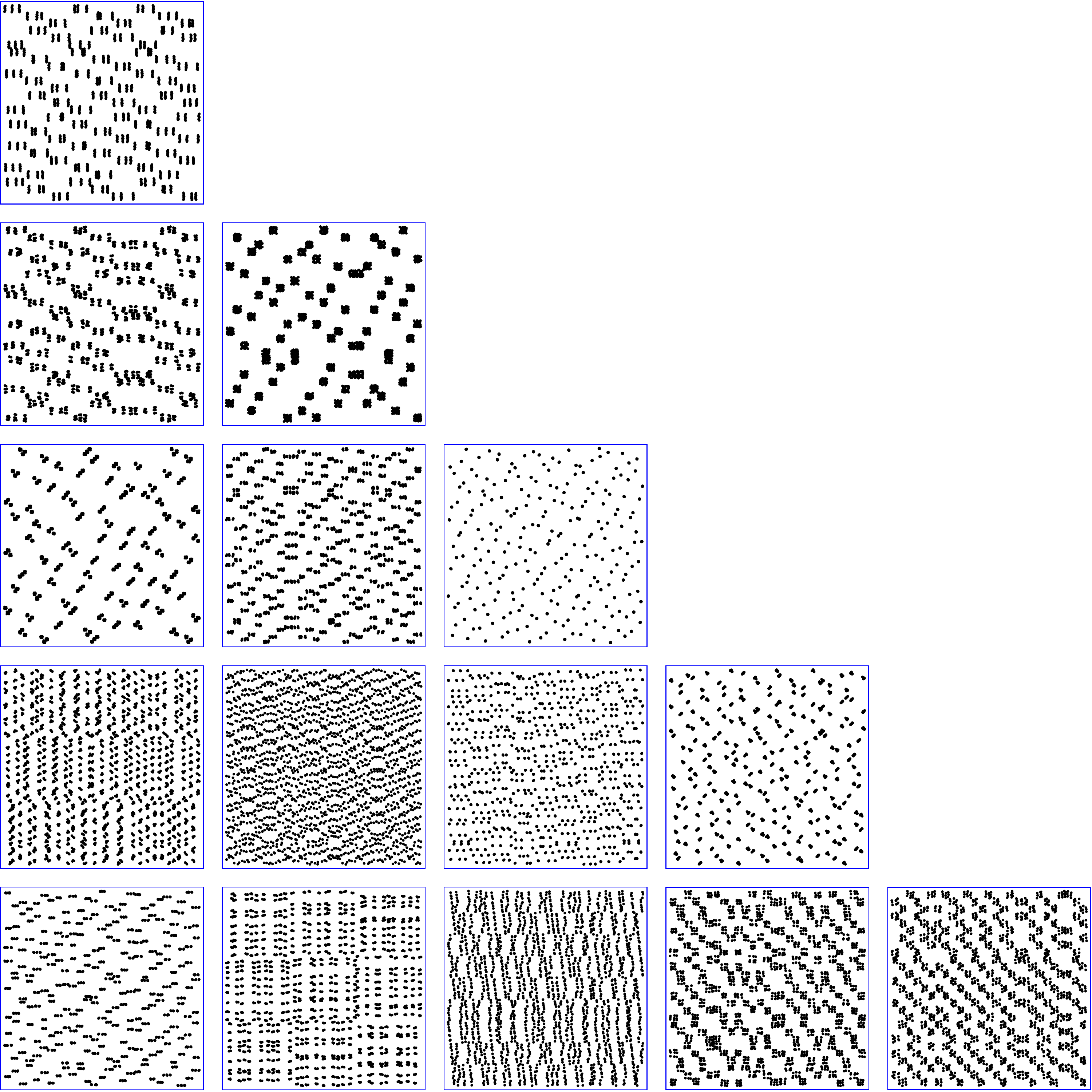}}\\
  \subfigure[]{\includegraphics[width=0.32\textwidth]{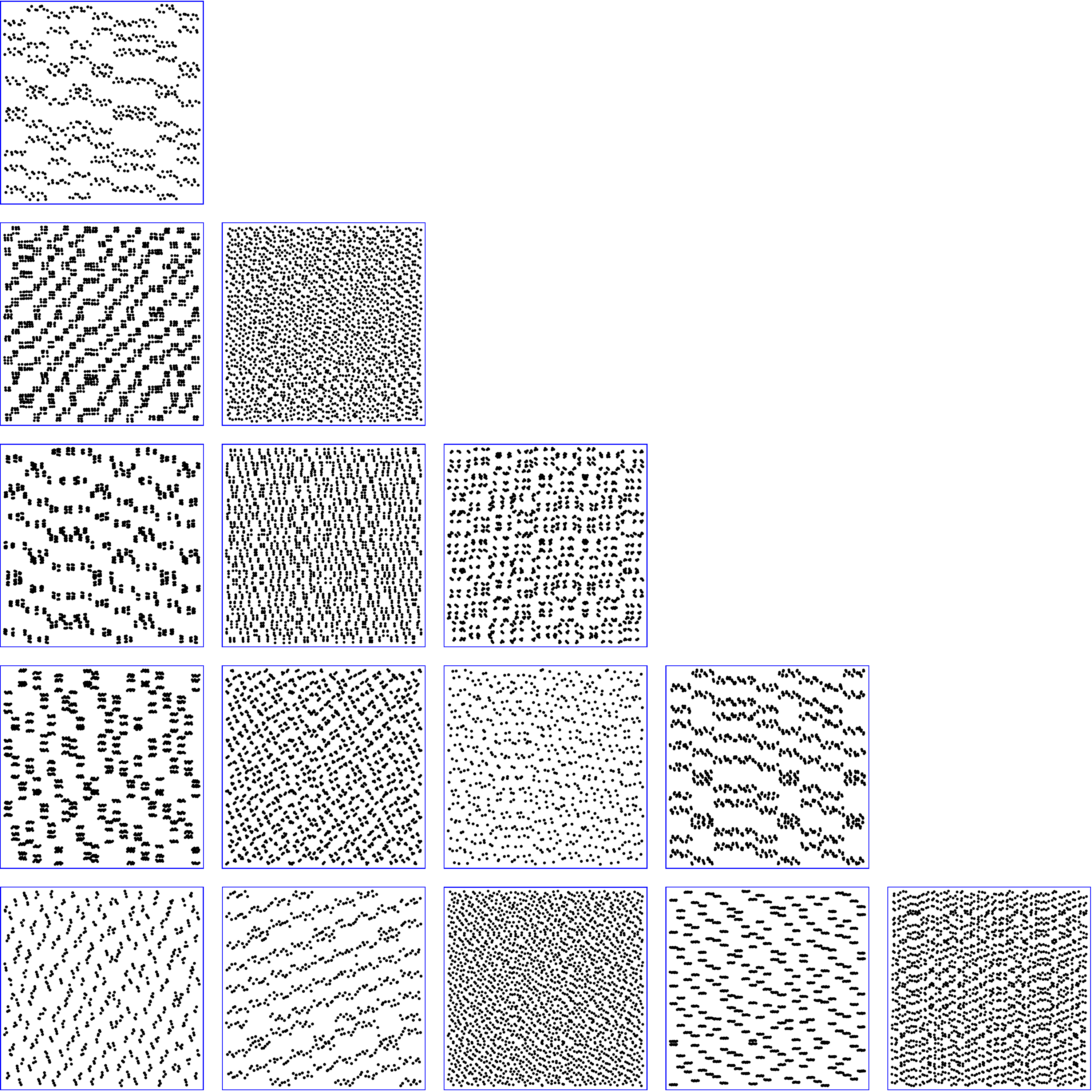}}
  \subfigure[]{\includegraphics[width=0.32\textwidth]{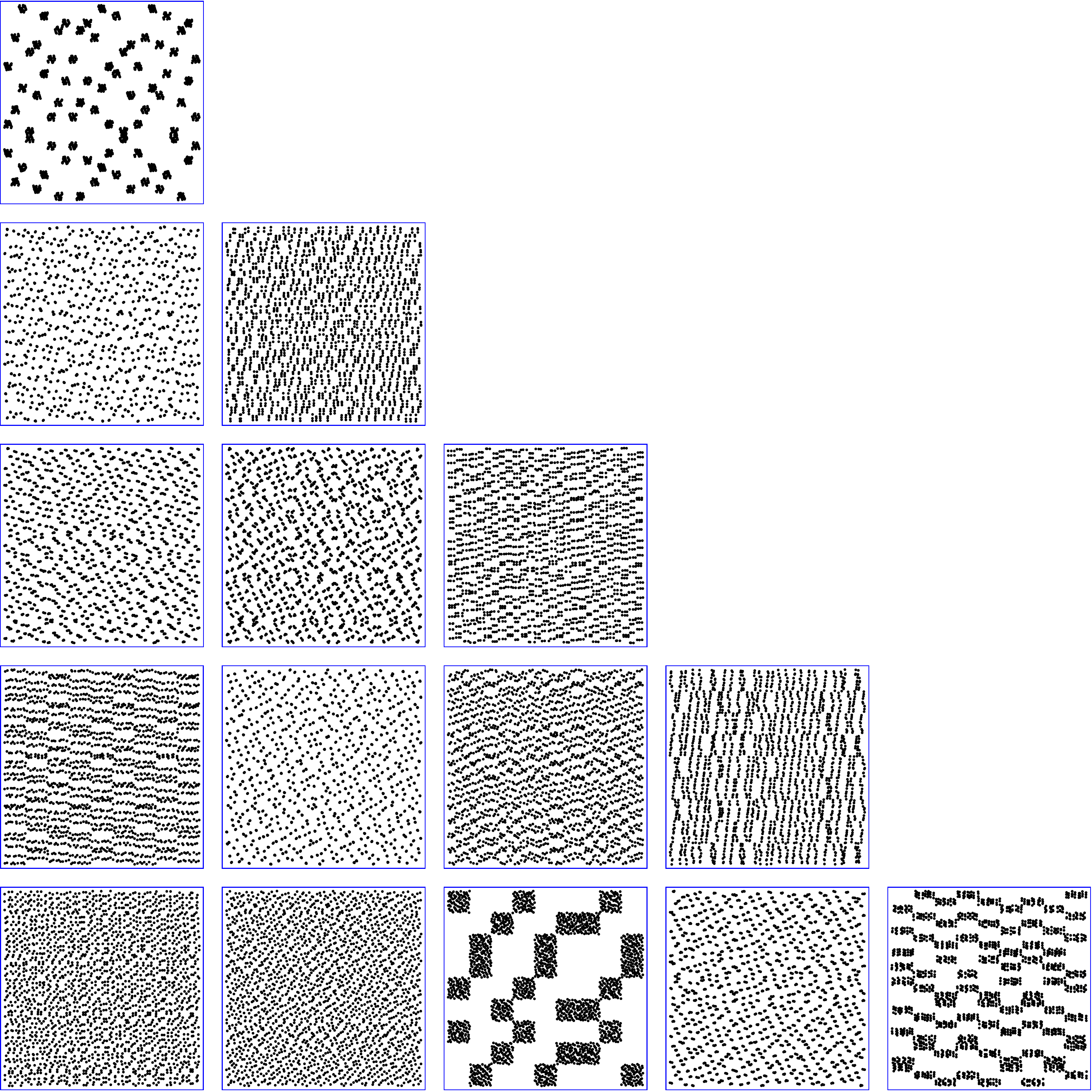}}
  \caption{Two-dimensional projections of $3^7$ points in dimension 6 resulting from 
    \texttt{"weak 1 net ti  0 1 2 3 4 5"} profiles for \texttt{i}
    $\in \{0,1,2,3,4\}$ with negatively weighted \texttt{"weak -1 net tj  0 1 2 3 4 5"}
    constraints for \texttt{j} $<$ \texttt{i} (from
    $(a)$ to $(e)$).}
  \label{Fig:NegativeNets}
\end{figure}

\section{Conclusion}

We extend the MatBuilder software to quality parameters $t > 0$.
This enables us to exemplify that the $(t,m,s)$-net property alone fails to characterize optimal quality both across projections as well as across all dimensions before reaching asymptotic behavior.
Using MatBuilder, we are confident that by exploring constraints on not necessarily disjoint subsets of dimensions, partially satisfying constraints,
and higher bases,
generator matrices can be found that outperform the classic constructions in practice.
Generalizing our approach to prime power Galois fields is a promising avenue of future research.

\bibliographystyle{spmpsci}
\bibliography{bibliography}

\end{document}